\newcommand\Func[2]{%
    \textbf{function} #1%\{%
    \algocf@group{#2}%
    %\}\;%
}
\newcommand\Forr[2]{%
    \textbf{for} #1 \textbf{do}%
    \algocf@group{#2}%
    %\}\;%
}
\newcommand\Blnk[2]{%
    \hspace{10pt}#1%\{%
    \algocf@group{#2}%
    \textbf{end}%\}\;%
}
\newtheorem{theorem}{Theorem}
\newtheorem{definition}{Definition}
\newtheorem{corollary}{Corollary}
\newtheorem{proposition}{Proposition}
\newtheorem{lemma}{Lemma}
\newtheorem{assumption}{Assumption}
\newtheorem{problem}{Problem}
\newtheorem*{problem*}{Problem}
\newcommand{\nn}{{\mathscr{N}\mathscr{N}}}
\newcolumntype{C}[1]{>{\centering$}p{#1}<{$}}
\newcommand{\timeIdx}[1]{$\begin{array}{C{5pt} @{} C{30pt} @{} C{5pt}} [ & #1 & ] \end{array}$}
\newcommand{\myalg}{{E}nergy{S}hield}
\definecolor{darkgreen}{RGB}{0,120,0}
  \providecommand\BibTeX{{%
    \normalfont B\kern-0.5em{\scshape i\kern-0.25em b}\kern-0.8em\TeX}}}
\begin{document}

%%
%% The "title" command has an optional parameter,
%% allowing the author to define a "short title" to be used in page headers.
\title{EnergyShield: Provably-Safe Offloading of Neural Network Controllers for Energy Efficiency}

% \title{EnergyShield: Safety Gur offloading of neural network workloads for energy-efficient infrastructure-assisted autonomous Systems}

\title{EnergyShield: Energy Conservation by off-device Evaluation of Neural Network Controllers with Safety Guarantees}

\title{EnergyShield: Provably-Safe Offloading of Neural Network Controllers for Energy Efficiency}
%
%\title{{\color{red}REPEATABILITY VERSION: INSTRUCTIONS  ATTACHED}\\ ~\\ 
%EnergyShield: Provably-Safe Offloading of Neural Network Controllers for Energy 
%Efficiency}

% begin nowrap

\author{Mohanad Odema, James Ferlez, Goli Vaisi, Yasser Shoukry, Mohammad Abdullah Al Faruque}
\affiliation{%
  \institution{Department of Electrical Engineering and Computer Science} \country{University of California, Irvine, CA, USA}
}

% \author{Mohanad Odema}
% \author{James Ferlez}
% \author{Goli Vaisi}
% \author{Yasser Shoukry}
% \author{Mohammad Abdullah Al Faruque}
% \affiliation{%
%   \institution{University of California, Irvine}
%   \city{Irvine}
%   \country{USA}}

% \author{Mohanad Odema}
% \affiliation{%
%   \institution{University of California, Irvine}
%   \city{Irvine, CA}
%   \country{USA}}

% \author{James Ferlez}
% \affiliation{%
%   \institution{University of California, Irvine}
%   \city{Irvine, CA}
%   \country{USA}}

% \author{Goli Vaisi}
% \affiliation{%
%   \institution{University of California, Irvine}
%   \city{Irvine, CA}
%   \country{USA}}

% \author{Yasser Shoukry}
% \affiliation{%
%   \institution{University of California, Irvine}
%   \city{Irvine, CA}
%   \country{USA}}

% \author{Mohammad Abdullah Al Faruque}
% \affiliation{%
%   \institution{University of California, Irvine}
%   \city{Irvine, CA}
%   \country{USA}}

% end nowrap

%%
%% By default, the full list of authors will be used in the page
%% headers. Often, this list is too long, and will overlap
%% other information printed in the page headers. This command allows
%% the author to define a more concise list
%% of authors' names for this purpose.
% \renewcommand{\shortauthors}{Trovato and Tobin, et al.}

%%
%% The abstract is a short summary of the work to be presented in the
%% article.
\begin{abstract}
To mitigate the high energy demand of Neural Network (NN) based Autonomous Driving 
Systems (ADSs), we consider the problem of offloading NN 
controllers from the ADS to nearby edge-computing infrastructure, but in such a 
way that formal \emph{vehicle} safety properties are guaranteed. In particular, we propose the 
\myalg~framework, which repurposes a controller ``shield'' as a low-power 
runtime safety monitor for the ADS vehicle. Specifically, the shield in 
\myalg~provides not only safety interventions but also a formal, state-based 
quantification of the tolerable edge response time before vehicle safety is  
compromised. Using \myalg, an ADS can then save energy by wirelessly offloading 
NN computations to edge computers, while still maintaining a formal guarantee 
of safety until it receives a response (on-vehicle hardware provides a 
just-in-time fail safe). To validate the benefits of \myalg, we implemented and 
tested it in the Carla simulation environment. Our results show that 
\myalg~maintains safe vehicle operation while providing significant energy 
savings compared to on-vehicle NN evaluation: from 24\% to 54\% less energy 
across a range of wireless conditions and edge delays.

% \blue{Mobile Edge Computing (MEC) promises to bring compute-capable servers closer to vehicles and end users to enable effective resource management through low-latency task offloading services. \red{I need to look into the survey and see what tasks they are discussing} Despite the strictly imposed timings constraints in terms of queuing, scheduling, and processing on the granular task levels, the critical deployment nature of numerous edge applications instigate the need for a formal computing framework. }
\end{abstract}

%%
%% The code below is generated by the tool at http://dl.acm.org/ccs.cfm.
%% Please copy and paste the code instead of the example below.
%%
% \begin{CCSXML}
% <ccs2012>
%  <concept>
%   <concept_id>10010520.10010553.10010562</concept_id>
%   <concept_desc>Computer systems organization~Embedded systems</concept_desc>
%   <concept_significance>500</concept_significance>
%  </concept>
%  <concept>
%   <concept_id>10010520.10010575.10010755</concept_id>
%   <concept_desc>Computer systems organization~Redundancy</concept_desc>
%   <concept_significance>300</concept_significance>
%  </concept>
%  <concept>
%   <concept_id>10010520.10010553.10010554</concept_id>
%   <concept_desc>Computer systems organization~Robotics</concept_desc>
%   <concept_significance>100</concept_significance>
%  </concept>
%  <concept>
%   <concept_id>10003033.10003083.10003095</concept_id>
%   <concept_desc>Networks~Network reliability</concept_desc>
%   <concept_significance>100</concept_significance>
%  </concept>
% </ccs2012>
% \end{CCSXML}

% \ccsdesc[500]{Computer systems organization~Embedded systems}
% \ccsdesc[300]{Computer systems organization~Redundancy}
% \ccsdesc{Computer systems organization~Robotics}
% \ccsdesc[100]{Networks~Network reliability}

%%
%% Keywords. The author(s) should pick words that accurately describe
%% the work being presented. Separate the keywords with commas.
\keywords{Formal Methods, Vehicular, Edge Computing, Autonomous Vehicles, Provable safety, Offloading, Autonomous Driving Systems} 

%% A "teaser" image appears between the author and affiliation
%% information and the body of the document, and typically spans the
%% page.

% \received{20 February 2007}
% \received[revised]{12 March 2009}
% \received[accepted]{5 June 2009}

%%
%% This command processes the author and affiliation and title
%% information and builds the first part of the formatted document.
\maketitle

%
% !TEX root = ./main.tex

\section{Introduction} % (fold)
\label{sec:introduction}
\let\thefootnote\relax\footnote{This work was partially supported by the National Science Foundation (NSF) under awards CCF-2140154, CNS-2002405, and ECCS-2139781 and by the C3.ai Digital Transformation Institute.}
Advances in the theory and application of Neural Networks (NNs), particularly 
Deep NNs (DNNs), have spurred revolutionary progress on a number of AI tasks, 
including perception, motion planning and control. As as result, DNNs have 
provided a feasible engineering solution to supplant formerly human-only tasks, 
most ambitiously in Autonomous Driving Systems (ADSs). However, 
state-of-the-art ADSs require the use of very large DNN architectures to solve  
essential perception and control tasks, which generally involve  processing the 
output of tens of cameras, LiDARs and other sensors. As a result, contemporary 
ADSs are only possible with significant computational resources deployed on the 
vehicle itself, since their DNNs must process such high-bandwidth sensors in 
closed-loop, in real time. The practical energy impact of high-capacity 
on-vehicle compute is understudied, but current research suggests that it is 
profound: e.g., up to a 15\% reduction in a vehicle's range 
\cite{lin2018architectural, mohan2020trade}.

% However, ADSs in particular have also revealed two critical shortcomings of DNN 
% deployments: first, it is challenging to \emph{formally verify} the safety 
% properties of  combined control/perception DNNs, which are designed implicitly 
% through learning; and second, very large DNNs are required to achieve 
% human-level ADS performance, which in turn demand significant computing -- and 
% hence energy -- resources.

At the same time, advances in semiconductor design and packaging have made 
possible cheap, low-power silicon; and advances in wireless networking have 
made high-bandwidth, low-latency radio links possible even in challenging 
multi-user environments. Together, these advances have led to increasingly 
ubiquitous, cheap, wirelessly accessible computational resources near the 
\emph{edge} of conventional hard-wired infrastructure. In particular, it is now 
possible to achieve reliable, millisecond-latency wireless connections between 
connected ADSs and nearby edge computing \cite{liu2019edge, malawade2021sage, 
baidya2020vehicular}.  % infrastructure.

The ubiquity of edge compute thus suggests a natural way to reduce the 
\emph{local} energy consumption on ADS vehicles: viz., by wirelessly 
\emph{offloading} onerous perception and control DNN computations to abundant 
nearby edge compute infrastructure. However, even modern wireless networks and 
offloading-friendly  DNN architectures (e.g. encoder/decoders) cannot provide 
\emph{formal guarantees} that bringing edge computing ``into the loop'' will 
have equivalent (or even acceptable) performance compared to on-vehicle 
hardware. This is an  unacceptable situation when human lives are at stake: 
even relatively rare and short delays in obtaining a control action or 
perception classification can have fatal consequences.

\begin{figure}[!t] %
	\centering 
	\includegraphics[,width = 0.49\textwidth]{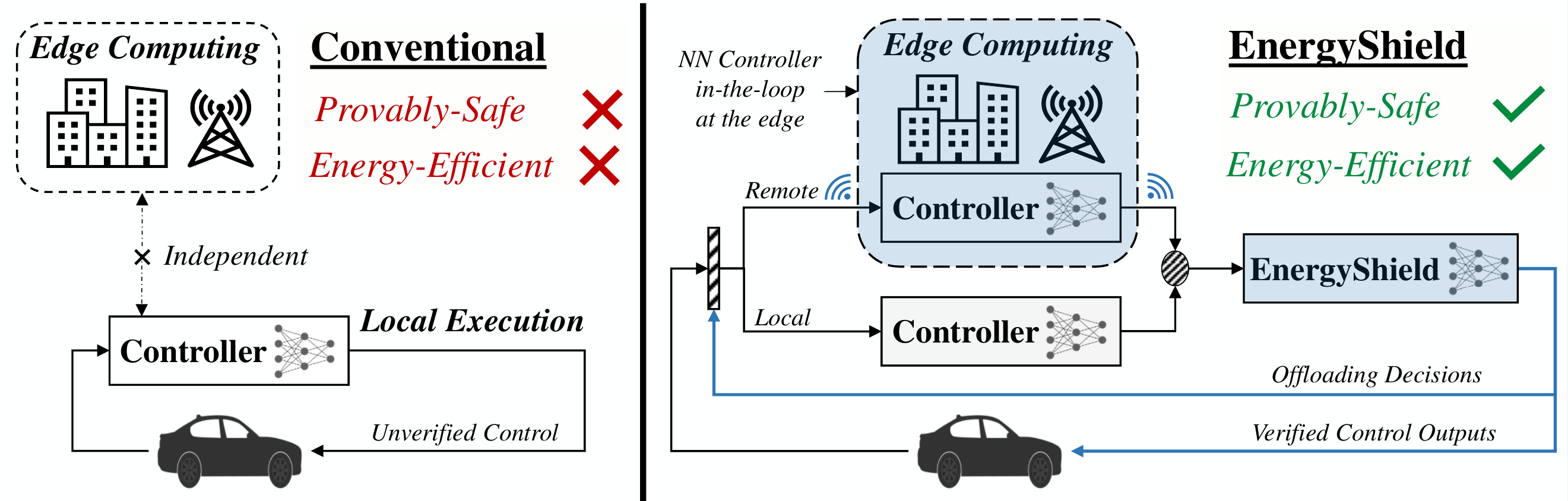} 
        \vspace{-4ex}
	\caption{Illustration of Provably-Safe Offloading of Neural Network Controllers for Energy Efficiency. %\red{needs to be referenced in text} %(The front wheel of the bicycle is 
	%depicted with a non-zero steering angle.)
	} 
	\label{fig:energy_shield_generic}
 \vspace{-2ex}
\end{figure}  %

In this paper, we propose the \myalg~framework as a mechanism to perform 
DNN-to-edge offloading of ADS controllers but \emph{in a formal, provably safe 
way}. Thus, \myalg~is, to the best of our knowledge, the first framework that 
enables significantly lower on-vehicle energy usage when evaluating large DNNs 
by intelligently offloading those calculations to edge compute in a provably 
safe way; see Figure \ref{fig:energy_shield_generic}. The primary idea of 
\myalg~is to perform safety-aware (state-)contextual offloading of DNN 
calculations to the edge, under the assumption that adequate on-vehicle 
computation is always available as a safety fallback. This is accomplished 
using a controller ``shield'' as both a mechanism to enforce safety \emph{and} 
as a novel, low-power runtime safety monitor. In particular, this shield-based 
safety monitor provides provably safe edge-compute response times: i.e., at 
each instance, \myalg~provides a time interval within which the edge-compute 
must respond to an offloading request in order to \emph{guarantee safety of the 
vehicle} in the interim. In the event that the edge resources don't provide a 
response within this time, on-board local compute proceeds to evaluate the 
relevant DNNs before vehicle safety is no longer assured. Further energy 
savings are obtained by incorporating an estimator to anticipate edge-compute 
load and wireless network throughput; a more intelligent offloading decision is 
made by comparing this estimate against the tolerable edge-compute delay 
provided by the runtime safety monitor -- thus preventing offloads that are 
likely to  fail. 

The main technical novelty of \myalg~is its shield-based runtime safety monitor 
mentioned above. Although controller ``shielding'' is a well-known methodology 
to render generic controllers safe, the shielding aspect of \myalg~contains two 
important novel contributions of its own: first, in the use of a shield not 
only to enforce safety but also as a runtime safety monitor to quantify the 
\emph{time until the system is unsafe}; and second, in the specific design of 
that runtime monitor with regard to implementation complexity and energy 
considerations. In the first case, \myalg~extends existing notions wherein the 
\emph{current value} of a (Zeroing-)Barrier Function (ZBF) is used as a runtime 
monitor to quantify the safety of an agent: in particular, it is novel in 
\myalg~to instead use the current value of the ZBF to derive a \emph{sound} 
quantification of the \emph{time} until the agent becomes unsafe. Moreover, 
\myalg~implements this quantification in an extremely energy efficient way: 
i.e., via a small lookup table that requires only a small number of FLOPS to 
obtain a guaranteed time-until-unsafe. This particular aspect of the runtime 
safety monitor is also facilitated by a using a particular, but known, ZBF and 
shield \cite{ferlez2020shieldnn} in \myalg: these components are both extremely 
simple, and so implementable using small, energy efficient NNs 
\cite{ferlez2020shieldnn}. Together, these design choices ensures that any 
energy saved by offloading is not subsequently expended in the implementation 
of \myalg~itself.

We conclude the paper with a significant sequence of experiments to validate 
both the safety and energy savings provided by the \myalg~framework. In 
particular, we tested \myalg~in the Carla simulation environment 
\cite{dosovitskiy2017carla} with several Reinforcement Learning (RL)-trained 
agents. Our experiments showed that \myalg~entirely eliminated obstacle 
collisions for the RL agents we considered -- i.e. made them safe -- while 
simultaneously reducing NN energy consumption by as much as 54\%. Additionally, 
we showed that \myalg~has intuitive, safety-conscious offloading behavior: when 
the ADS is near an obstacle -- and hence less safe -- \myalg's runtime safety 
monitor effectively forced exclusively on-vehicle NN evaluation; when the ADS 
was further from an obstacle -- and hence more safe -- \myalg's runtime safety 
monitor allowed more offloading, and hence more energy savings.

\noindent \textbf{Related Work:~} \emph{Formal Methods for Data-Trained  
Controllers.} 
%A number of approaches exist to improve the safety of controllers 
%without formal guarantees: 
% e.g. reward-shaping~\cite{saunders2018trial}, 
% Bayesian and robust 
% regression~\cite{liu2019robust,berkenkamp2016bayesian,pauli2020training}, and 
% policy optimization with 
% constraints~\cite{gaskett2003reinforcement,moldovan2012safe,turchetta2016safe,wen2020safe}. 
A number of approaches exist to assert the safety of data-trained controllers 
with formal guarantees; in most, ideas from control theory are used in some  
way to augment the trained controllers to this end. A good survey of these methods is \cite{DawsonSafeControlLearned2022}. Examples of this approach 
include the use of Lyapunov 
methods~\cite{berkenkamp2017safe,chow2019lyapunov,chow2018lyapunov}, safe model 
predictive control~\cite{koller2018learning}, reachability 
analysis~\cite{akametalu2014reachability,govindarajan2017data,fisac2018general}, 
barrier certificates~\cite{wang2018safe,
%wabersich2018scalable,
srinivasan2020synthesis,%
llanes2022safety,%
xiao2021rule,%
taylor2020control%,li2019temporal,
,robey2020learning}, and online learning of  
uncertainties~\cite{shi2019neural}.
%Unfortunately, methods of this type tend to 
%be computationally expensive, specific to certain controller structures or else 
%employ training algorithms that require certain assumptions on the system 
%model. However, 
Controller ``shielding'' \cite{AlshiekhSafeReinforcementLearning2017} is  
another technique that often falls in the barrier function category 
\cite{ChengEndtoEndSafeReinforcement2019}. Another approach tries to verify the 
formal safety properties of learned controllers using formal verification 
techniques (e.g., model checking): e.g., through the use of SMT-like 
solvers~\cite{sun2019formal,dutta2018output,liu2019algorithms} or hybrid-system 
verification~\cite{fazlyab2019efficient,xiang2019reachable,ivanov2019verisig}. 
However, these techniques only assess the safety of a given controller rather 
than design or train a safe agent.

% Given the criticality of autonomous systems applications, researchers have 
% investigated the application of formal methods to their operation to provide 
% safely-provable guarantees on these systems' operation considering their highly 
% dynamic deployment environments. The authors in \cite{llanes2022safety} 
% developed a runtime controller assurance mechanism for collision avoidance of 
% two unmanned Aerial vehicles, where they compute a system's reachable set under 
% a backup control law, before developing am optimization problem based on 
% control barrier function to filter the primary controller when needed to keep 
% the reachable set within the reach of a known, conservative,and safe region. 
% \cite{xiao2021rule} developed a recursive framework to provide optimal control 
% strategies for AVs that satisfy the complex driving rule specifications, where 
% the convergence on the optimal control problem is realized through Control 
% Lyapunov Functions (CLFs), and safety is enforced through Control Barrier 
% Functions (CBFs). In terms of connected vehicles, the authors in 
% \cite{khayatian2021cooperative} proposed to extend the Responsibility-Sensivity 
% Safety (RSS) rules for CAVs to facilitate cooperative safe driving when motion 
% plans of CAVs change (e.g., sudden stop). 

% Task offloading in ML models and split computing

\emph{Edge Computing for Autonomous Systems.} A number of different edge/cloud 
offloading schemes have been proposed for ADSs, however none to date has  
provided formal guarantees. Some have focused on scheduling techniques and 
network topology to achieve effective offloading \cite{tang2021vecframe,%
zhang2021emp, feng2018mobile,cui2020offloading,sasaki2016vehicle}. %
Others focused on split and other NN architectures to make offloading more 
efficient \cite{malawade2021sage,odema2022testudo, chen2022romanus}.

\section{Preliminaries} % (fold)
\label{sec:preliminaries}

\subsection{Notation} % (fold)
\label{sub:notation}
Let $\mathbb{R}$ denote the real numbers; $\mathbb{R}^+$ the non-negative real 
numbers; $\mathbb{N}$ the natural numbers; and $\mathbb{Z}$ the integers. For a 
continuous-time signal, $x(t), t \geq 0$, denote its discrete-time sampled 
version as $x[n]$ for some fixed sample period $T$ (in seconds); i.e. let $x[n] 
\triangleq x(n\cdot T)$ for $n\in \mathbb{Z}$. Let  $\mathbf{1}_a : \mathbb{R} 
\rightarrow \{a\}$ be the constant function with value $a$; i.e., 
$\mathbf{1}_a(x) = a$ for all $x\in \mathbb{R}$ (interpreted as a sequence 
where necessary). Finally, let $\dot{x} = f(x , u)$ be a generic control system 
with $x \in \mathbb{R}^n$ and $u\in  \mathbb{R}^m$, and let $\pi : \mathbb{R} 
\times \mathbb{R}^n \rightarrow \mathbb{R}^m$ be a (possibly) time-varying 
controller. For this system and controller, consider a time $\tau \geq 0$ and 
state $x_0$, and denote by $\zeta_\pi^{t_0,x_0}: \mathbb{R}^+ \rightarrow 
\mathbb{R}^n$ the $t_0$-shifted state evolution of this system under controller 
$\pi$ assuming  $x(t_0) = x_0$. Let $\zeta_\pi^{n_0,x_0}[n]$ indicate the same, 
except in discrete-time and with zero-order hold application of $\pi$.  $\lVert 
\cdot \rVert$ and $\lVert \cdot \rVert_2$ will denote the $\max$ and two-norms 
on $\mathbb{R}^n$, respectively.
% subsection notation (end)

\subsection{Kinematic Bicycle Model} % (fold)
\label{sub:kinematic_bicycle_model}
\begin{figure} %[H] %
	\centering 
	\includegraphics[width=0.30\textwidth]{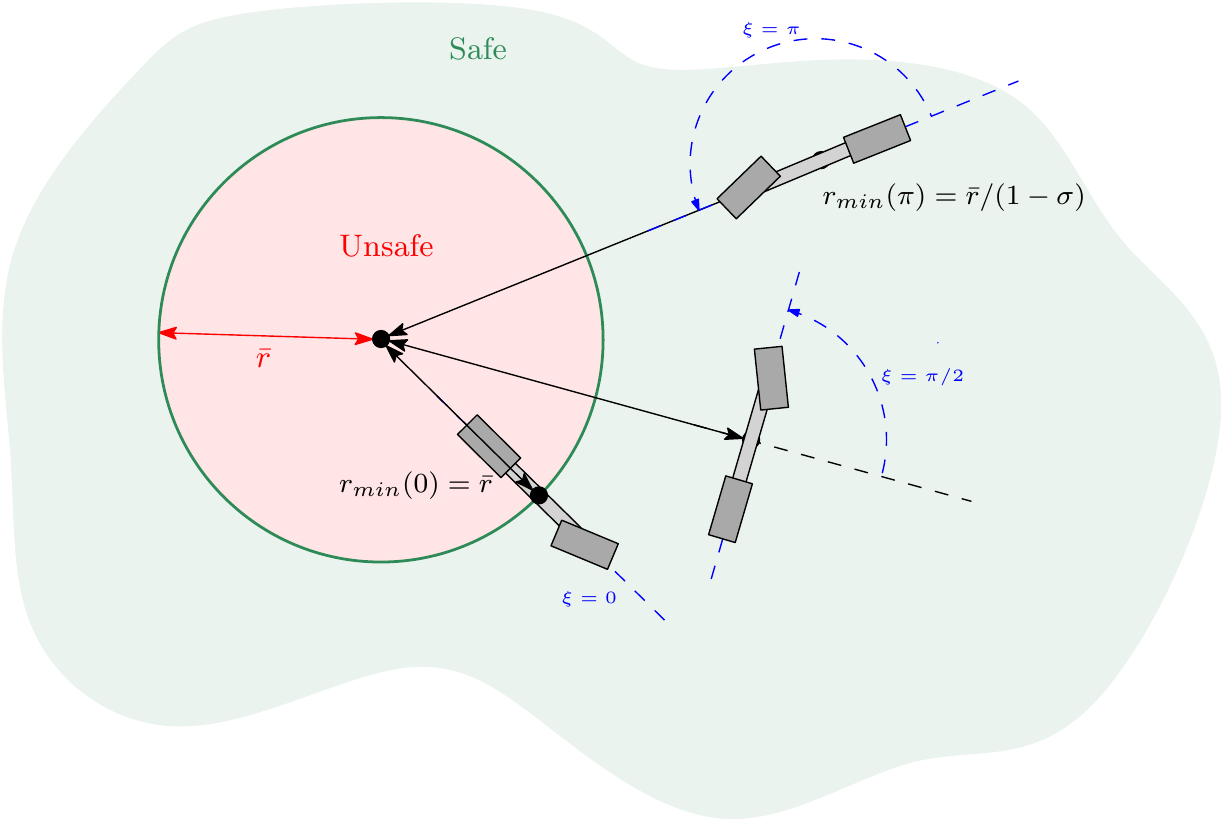} %\vspace{-3mm} 
	\caption{Obstacle specification and minimum barrier distance as a function 
	of relative vehicle orientation, $\xi$. %(The front wheel of the bicycle is 
	%depicted with a non-zero steering angle.)
	} 
	\label{fig:obstacle_diagram}
\end{figure}  %
Throughout this paper, we will use the kinematic bicycle model (KBM) as the 
formal dynamical model for our autonomous vehicle 
\cite{KongKinematicDynamicVehicle2015}. However, we consider the KBM model in 
terms of state variables relative to a fixed point in the plane,  rather than 
absolute Cartesian coordinates. That is, the states are  the distance to a 
fixed point, $\lVert \vec{r} \rVert$, and orientation angle, $\xi$, of the 
vehicle with respect to the same. These quantities describe a state-space model:
\begin{equation}\label{eq:kbm_dynamics}
	\left(
		\begin{smallmatrix}
			\dot{r} \\
			\dot{\xi} \\
			\dot{v}
		\end{smallmatrix}
	\right)
	\negthickspace
	=
	\left(
		\begin{smallmatrix}
			v \cos( \xi - \beta ) \\
			-\frac{1}{r} v \sin(\xi - \beta) - \frac{v}{\ell_r} \sin(\beta) \\
			a 
		\end{smallmatrix}
	\right);
	\quad
	\beta \triangleq \tan^{-1}(\tfrac{\ell_r}{\ell_f + \ell_r} \tan(\delta_f))
\end{equation}
% \begin{align}
% 	\dot{r} &= v \cos( \xi - \beta ) &\hspace{10pt} \dot{v} = a 
% 		\notag\\
% 	\dot{\xi} &= -\frac{1}{r} v \sin(\xi - \beta) - \frac{v}{\ell_r} \sin(\beta)
% 		&\hspace{10pt}
% 		\beta = \tan^{-1}(\tfrac{\ell_r}{\ell_f + \ell_r} \tan(\delta_f))
% \end{align}
\noindent where $r(t)$ is distance to the origin; $a$ is the linear 
acceleration input; $\delta_f$ is the front-wheel steering angle 
input\footnote{That is the steering angle can be set instantaneously, and the 
dynamics of the steering rack can be ignored.}; and $\psi + \xi = 
\tan^{-1}(y/x)$. We note a few special states for convenience: at $\xi = \pm 
\pi/2$, the vehicle is oriented tangentially to the obstacle, and at $\xi = \pi 
\text{ or } 0$, the vehicle is pointing directly at or away from the origin, 
respectively (Figure \ref{fig:obstacle_diagram}). $\beta$ is an 
\emph{invertible function} of the steering control, $\delta_f$.

We also assume that the KBM has a steering constraint so that $\delta_f \in 
[-{\delta_f}_\text{max}, {\delta_f}_\text{max}]$. For simplicity, we will 
consider $\beta$ directly as a control variable; this is without loss of 
generality, for the reason noted above. Thus, $\beta$ is also constrained: 
$\beta \in [-\beta_\text{max}, \beta_\text{max}]$. 
%
%As a final notational simplification, 
Finally, we define the state and control vectors for the KBM as: $\chi 
\triangleq (\xi, r, v)$ and $\omega \triangleq (a , \beta)$, where $\omega \in 
\Omega_\text{admis.} \triangleq \mathbb{R} \times [-\beta_\text{max}, 
\beta_\text{max}]$, the set of admissible controls. Thus, the dynamics of the 
KBM are given by $\dot{\chi} = f_\text{KBM}(\chi, \omega)$ with $f_\text{KBM}$ 
defined according to \eqref{eq:kbm_dynamics}.

% subsection kinematic_bicycle_model (end)

\subsection{Barrier Functions and Shielding} % (fold)
\label{sub:barrier_functions_and_shielding}

In the sequel, we will use a controller ``shield'', which is a methodology for 
instantaneously correcting the outputs produced by a controller in closed loop; 
the objective is to make corrections such that the original controller, however 
it was designed or implemented, becomes safe --  hence the ``shield'' moniker. 
Specifically, a controller shield is designed around a real-valued function 
over the state space of interest, called a (Zeroing-) Barrier Function (ZBF). 
The ZBF directly specifies a set safe states by its sign: states for which the 
ZBF is nonnegative are taken to be safe. The ZBF in turn indirectly specifies 
safe controls (as a function of state) in such a way that the sign of the ZBF 
is invariant along trajectories of the dynamics.

Formally, consider a control system $\dot{x} = f(x, u)$ in closed loop with a 
state-feedback controller $\pi: x \mapsto u$. In this scenario, a feedback 
controller in closed loop converts the control system into an autonomous one -- 
the autonomous vector field $f(\cdot , \pi( \cdot )) $. In the context of an  
autonomous system, then, we recall the following definition for a 
Zeroing-Barrier Function (ZBF):
\begin{definition}[Zeroing Barrier Function (ZBF) {\cite[Definition 2]{XuRobustnessControlBarrier2015}}]
	Let $\dot{x} = f(x, \pi(x))$ be the aforementioned closed-loop, autonomous 
	system with $x(t) \in \mathbb{R}^n$. Also, let $h : \mathbb{R}^n 
	\rightarrow \mathbb{R}$, and define $\mathcal{C} \triangleq \{x \in  
	\mathbb{R}^n : h(x) \geq 0\}$. If there exists a locally Lipschitz, 
	extended-class-K function,  $\alpha$ such that:
	\begin{equation}
		\nabla_x h(x)\cdot f(x, \pi(x)) \geq -\alpha(h(x)) \text{ for all } x\in \mathcal{C}
	\end{equation}
	then $h$ is said to be a \textbf{zeroing barrier function (ZBF)}.
\end{definition}
Moreover, the conditions for a barrier function above can be translated into a 
set membership problem for the outputs of such a feedback controller. This is 
explained in the following corollary.

% \begin{corollary}\label{cor:zbf_set_membership}
% 	Let $\dot{x} = f(x,u)$ be a control system with a vector field, $f$, that 
% 	is globally Lipschitz continuous in both of its arguments on a set 
% 	$\mathcal{D} \times \mathcal{D}_u$. Furthermore, consider a Lipschitz 
% 	continuous state-feedback controller $\pi : x \mapsto u$. If $h$ and 
% 	$\alpha$ are as in \cref{def:zbf} with $\mathcal{C} \subseteq \mathcal{D}$ 
% 	and
% 	\begin{equation}
% 		L_f h(x) + \alpha(h(x)) = 
% 		\nabla_x^\text{T} h(x) \cdot f(x, \pi(x)) + \alpha(h(x)) \geq 0 \quad \forall x \in \mathcal{D}
% 	\end{equation}
% 	then the set $\mathcal{C}$ is forward invariant under the state-feedback 
% 	controller $\mu$.
% \end{corollary}

% \cref{cor:zbf_set_membership} can be applied exactly as 
% \cref{thm:barrier_invariance} to show safety for known feedback controllers. 
% However this result can also be interpreted for safe feedback controller 
% \emph{design}, and this is how we will use it in the sequel. In particular, we 
% have the following theorem.

\begin{proposition}
\label{prop:feedback_control_action_set}
	Let $\dot{x} = f(x,u)$ be a control system that is Lipschitz continuous in 
	both of its arguments on a set $\mathcal{D} \times \Omega_\text{admis.}$; 
	furthermore, let $h: \mathbb{R}^n \rightarrow \mathbb{R}$ with 
	$\mathcal{C}_h \triangleq \{x \in \mathbb{R}^n | h(x) \geq 0\} \subseteq 
	\mathcal{D}$, and let $\alpha$ be a class $\mathcal{K}$ function. If the set
	\begin{equation}\label{eq:safe_control_set}
			R_{h,\alpha}(x) \triangleq \{ u \in \Omega_\text{admis.} | \nabla_x^\text{T} h(x) \cdot f(x, u) + \alpha(h(x)) \geq 0 \}
	\end{equation}
	is non-empty for each $x \in \mathcal{D}$, and a Lipschitz  continuous 
	feedback controller $\pi : x \mapsto u$ satisfies
	\begin{equation}
	\label{eq:closed_loop_safe}
		\pi(x) \in R_{h,\alpha}(x) \quad \forall x \in \mathcal{D}
	\end{equation}
	then $\mathcal{C}_h$ is forward invariant for the closed-loop dynamics 
	$f(\cdot, \pi(\cdot))$.

	In particular, if $\pi$ satisfies \eqref{eq:closed_loop_safe} and $x(t)$  
	is a trajectory of $\dot{x} = f(x,\pi(x))$ with $h(x(0)) \geq 0$, then 
	$h(x(t)) \geq 0$ for all $t \geq 0$.
\end{proposition}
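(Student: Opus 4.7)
The plan is to reduce the statement to the standard forward-invariance theorem for Zeroing-Barrier Functions (ZBFs) in closed loop \cite{XuRobustnessControlBarrier2015}. The key observation is that the hypothesis $\pi(x) \in R_{h,\alpha}(x)$ for every $x \in \mathcal{D}$ is precisely the statement that $h$ is a ZBF for the autonomous closed-loop vector field $x \mapsto f(x, \pi(x))$. Once that is recognized, forward invariance of the zero-superlevel set $\mathcal{C}_h$ follows from a scalar comparison-lemma argument.

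First I would establish well-posedness of the closed-loop ODE. Since $f$ is Lipschitz on $\mathcal{D} \times \Omega_\text{admis.}$ and $\pi$ is Lipschitz, the composite vector field $x \mapsto f(x, \pi(x))$ is Lipschitz on $\mathcal{D}$, so Picard--Lindel\"of yields a unique maximal trajectory $x(t)$ from any $x(0) \in \mathcal{C}_h$. Substituting $u = \pi(x)$ into \eqref{eq:safe_control_set} and applying the chain rule to the scalar function $y(t) \triangleq h(x(t))$ then gives, as long as $x(t) \in \mathcal{D}$,
\[
\dot y(t) = \nabla_x^T h(x(t)) \cdot f\bigl(x(t), \pi(x(t))\bigr) \geq -\alpha\bigl(y(t)\bigr), \quad y(0) \geq 0.
\]

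The remaining step is to deduce $y(t) \geq 0$ for all $t \geq 0$ from this differential inequality. Because $\alpha$ is class $\mathcal{K}$ we have $\alpha(0) = 0$, so $z \equiv 0$ solves the majorizing scalar ODE $\dot z = -\alpha(z)$ starting at $z(0) = 0$; the monotone comparison principle then yields $y(t) \geq 0$ on the interval of existence. A clean way to package the argument is by contradiction: if a first exit time $t_0 > 0$ existed with $y(t_0) = 0$ and $y(t) < 0$ immediately afterwards, then $\dot y(t_0) \geq -\alpha(0) = 0$ would contradict the assumed downward crossing. Because the trajectory then cannot leave $\mathcal{C}_h \subseteq \mathcal{D}$, the local solution extends globally, giving $h(x(t)) \geq 0$ for all $t \geq 0$.

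The main obstacle is handling the boundary case $h(x(t_0)) = 0$ rigorously: the pointwise inequality $\dot y(t_0) \geq 0$ does not by itself preclude an instantaneous exit from $\mathcal{C}_h$ for a generic continuous vector field. The argument is made precise by invoking the comparison lemma for $\dot y \geq -\alpha(y)$ with $\alpha$ class $\mathcal{K}$, which guarantees that the zero trajectory is the unique lower solution on $[0,\infty)$ touching the boundary; for strictness one would typically upgrade $\alpha$ to locally Lipschitz (as in the ZBF definition), so that the one-dimensional comparison ODE has unique forward solutions. In practice, once the ZBF inequality is verified, one simply appeals to the cited invariance result.
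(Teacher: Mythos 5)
Your proposal is correct and takes essentially the same route as the paper: the paper's entire proof is a one-line appeal to the forward-invariance theorem for zeroing barrier functions in \cite{XuRobustnessControlBarrier2015}, which is exactly the reduction you identify (the hypothesis $\pi(x)\in R_{h,\alpha}(x)$ makes $h$ a ZBF for the closed-loop field, and invariance of $\mathcal{C}_h$ follows). Your added detail on well-posedness and the comparison-lemma handling of the boundary case simply fills in what the cited theorem already packages.
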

\begin{proof}
	This follows directly from an application of zeroing barrier functions 
	\cite[Theorem 1]{XuRobustnessControlBarrier2015}.
\end{proof}
Proposition \ref{prop:feedback_control_action_set} is the foundation for 
controller shielding: \eqref{eq:safe_control_set} and 
\eqref{eq:closed_loop_safe} establish that $h$ (and associated $\alpha$)  forms 
a ZBF for the closed-loop, autonomous dynamics $f(\cdot , \pi(\cdot))$ . Note 
also that there is no need to distinguish between a closed-loop feedback 
controller $\pi$, and a composite of $\pi$ with a function that \emph{shields} 
(or filters) its output based on the current state. Hence, the following 
definition:

\begin{definition}[Controller Shield]
\label{def:controller_shield}
	Let $\dot{x} = f(x, u)$, $h$, $\mathfrak{C}_h$, $\alpha$ and $\mathcal{D} 
	\times \Omega_\text{admis.}$ be as in Proposition 
	\ref{prop:feedback_control_action_set}. Then a \textbf{controller shield} 
	is a Lipschitz continuous function $\mathfrak{S} : \mathcal{D} \times 
	\Omega_\text{admis.} \rightarrow \Omega_\text{admis.}$ such that
	\begin{equation}
		\forall (x, u) \in \mathcal{D} \times \Omega_\text{admis.} . \mathfrak{S}(x, u) \in R_{h,\alpha}(x).
	\end{equation}
\end{definition}

% the only difference is that instead of designing a single 
% controller $\pi$, we will design a safe ``combined'' controller $\nn(\cdot , 
% \mu(\cdot))$. In this usage, when a controller $\mu$ generates a control 
% action, $\mu(x)$, that lies outside of the set $R(x)$, $\nn$ must map it to a 
% control \emph{within} the set $R(x)$. 
%  \b{(@James: I recommend that you mention here the relation between the set 
% R(x) and the decision boundaries of the NN. If we have space, you may put two 
% figures showing a cartoon for the set R(x) and one good NN (all its boundaries 
% are inside the set) and another one that is bad (its bounadries are not 
% completely inside the set). You may even put it as a Lemma or theorem that 
% follows directly from Cor-1).}

% subsection barrier_functions_and_shielding (end)

\subsection{A Controller Shield for the KBM} % (fold)
\label{sub:kbm_controller_shield}
In this paper, we will make use of the existing ZBF and controller shield 
designed for the KBM in \cite{ferlez2020shieldnn}. These function in concert to 
provide controller shielding for the safety property also illustrated in Figure 
\ref{fig:obstacle_diagram}: i.e., to prevent the KBM from entering a disk of 
radius $\bar{r}$ centered at the origin.
% This safety property motivates the rest 
% of the discussion in this section.

In particular, \cite{ferlez2020shieldnn} proposes the following class of 
\emph{candidate} barrier functions for the KBM:
\begin{equation}
	\label{eq:bicycle_barrier}
	h_{\bar{r},\sigma}(\chi) = h_{\bar{r},\sigma}(\xi, r, v) = \tfrac{\sigma \cos(\xi/2) + 1-\sigma}{\bar{r}} - \tfrac{1}{r}
\end{equation}
where $\sigma \in (0,1)$ is the parameter that characterizes the class; this 
choice is combined with a class $\mathcal{K}$ function:
\begin{equation}
\label{eq:extended_class_k}
	\alpha_{v_\text{max}}(x) = K \cdot v_\text{max} \cdot x
\end{equation}
to form a parametric class of ZBFs. Note also that this class of ZBFs ignores 
the state variable, $v$; it is a result in \cite{ferlez2020shieldnn} that this 
class is useful as a barrier function provided the vehicle velocity remains  
(is controlled) within the range $(0, v_\text{max}]$. Note also that the 
equation has $h_{\bar{r},\sigma}(\chi) = 0$ has a convenient solution, which we 
denote by $r_\text{min}$ for future reference:
\begin{equation}
\label{eq:r_min}
	r_\text{min}(\xi) = \bar{r}/( \sigma \cos(\xi/2) + 1-\sigma ).
\end{equation}

One main result in \cite{ferlez2020shieldnn} is a mechanism for choosing the 
parameter $\sigma$ as a function of KBM parameters (e.g. $\ell_r$) and safety 
parameter, $\bar{r}$ so that the resulting specific function is indeed a ZBF as 
required. 
% Furthermore, \cite{ferlez2020shieldnn} contains the following Theorem 
% about this class of barrier functions, which we will find useful in the sequel.
% \begin{theorem}[\cite{ferlez2020shieldnn}]
% \label{thm:safe_control_region_grows}
% 	Consider any fixed parameters $\bar{r}$, $\ell_r$ and $\sigma$. Assume that 
% 	$0 \leq v \leq v_\text{max}$. If 
% 	$K$ is chosen such that:  
% 	\begin{equation}
% 			K \geq K_{\bar{r},\sigma} \triangleq \max(\{1,1/\bar{r}\}) \cdot \left( \tfrac{\sigma}{2 \cdot \bar{r} } 
% 		+ 2 \right)
% 	\end{equation}
% 	then the function $\nabla_\chi^\text{T} h_{\bar{r},\sigma}(x) \cdot 
% 	f_\text{KBM}(\chi, \omega) + \alpha(h_{\bar{r},\sigma}(\chi))$ 
% 	% 	\begin{equation}
% 	% 		\mathscr{L}( (\bar{r} + r, \xi, v), \beta )
% 	% 	\end{equation}
% 	is a monotonically increasing function in $r$ for all $r \geq \bar{r}$ for 
% 	each fixed choice of $v \in (0,v_\text{max}]$ and the remaining state and 
% 	control variables. 

% 	In particular, for all $\chi \in \mathcal{C}_{h_{\bar{r},\sigma}}$ such 
% 	that $v \in (0, v_\text{max}]$ it is the case that: 
% 	\begin{equation}\label{eq:simple_safe_controls}
% 		R_{h_{\bar{r},\sigma}}((r_\text{min}(\xi),\xi,v)) \subseteq R_{h_{\bar{r},\sigma}}(\chi).
% 	\end{equation}
% \end{theorem}

Finally, we note that \cite{ferlez2020shieldnn} also suggests an extremely 
lightweight implementation of barrier based on \eqref{eq:bicycle_barrier}. That 
is, it contains a ``Shield Synthesizer'' that implements a controller shield  
by approximating a simple single-input/single-output concave function with a 
ReLU NN \cite[pp 6]{ferlez2020shieldnn}. This construction will also prove 
advantageous later. We denote by $\mathfrak{S}_\text{KBM}$ the resulting 
controller shield, with associated barrier, KBM and safety parameters inferred 
from the context.

\section{Framework} % (fold)
\label{sec:framework}
\emph{NOTE: For the duration of this section, we will denote by $x$, $y$ and  
$u$ the state, sensor and control variables of an ADS, respectively; this 
abstract notation will illustrate the \myalg~framework free from  specific 
modelling details. A detailed consideration of \myalg, with formal proofs and 
assumptions, appears in the Section \ref{sec:shieldnn}.}

\subsection{\myalg~Motivation and Context} % (fold)
\label{sub:energyshield_motivation}
The basic motivation for the \myalg~framework is the following. Suppose that 
an ADS contains a large NN, $\nn_c$, that is responsible for producing a 
control action, $u$, in response to a sensor signal, $y$. Further assume that, 
by virtue of its size, computing an output of $\nn_c$ with on-vehicle hardware 
consumes significant energy. Thus, it would be advantageous, energy-wise, to 
\emph{offload} evaluations of $\nn_c$ to edge computing infrastructure: in 
other words,  wirelessly transmit a particular sensor measurement, $y$, to 
off-vehicle edge computers, where the output $u = \nn_c(y)$ is computed and 
returned to the requesting ADS.

The problem with this approach is largely from a safety point of view. In 
particular, the controller $\nn_c$ was designed to operate \emph{in real time  
and in closed-loop}: i.e. the control action at discrete-time sample $n$ is 
intended to be computed from the sensor measurement at the \emph{same} time 
sample\footnote{In our formal consideration, we will model a one-sample 
computation delay.}. In the notation of discrete-time signals (see Section  
\ref{sub:notation}), this means: $u[n] = \nn_c(y[n])$. However, offloading a 
sensor measurement, $y[n]$, to the edge entails that the correct output of 
$\nn_c(y[n])$ will not be back on-vehicle before some non-zero number of  
samples, say $\Delta$. Thus, $\nn_c(y[n])$ will not be available at time $n$ to 
set $u[n] = \nn_c(y[n])$ as intended; rather, the soonest possible use of the 
output $\nn_c(y[n])$ will be at time $n+\Delta$, or $u[n +  \Delta] = 
\nn_c(y[n])$. This delay creates obvious safety issues, since the state of the 
vehicle --  and hence the correct control to apply -- will have 
changed in the  intervening $\Delta$ time samples. More importantly, even the 
``correct'' control action applied at $n + \Delta$ may be insufficient to 
ensure safety: e.g., after $\Delta$ samples have elapsed, it may be too late to 
apply adequate evasive steering.

% subsection energyshield_motivation (end)

\subsection{\myalg~Structure} % (fold)
\label{sub:energyshield_structure}
If we assume that the ADS itself has enough on-vehicle compute to obtain an 
output $\nn_c(y[n])$ \emph{in real time}, then this safety problem above  
becomes one of making an \emph{offloading decision}: ideally one that saves  
energy without compromising safety. That is, should a particular computation of 
$\nn_c$ be offloaded to the edge? And how long should the ADS wait for a 
response so as to ensure the situation is correctable?

The nature of the offloading decision means that \myalg~must address two 
essential and intertwined issues in order to ensure safety of the ADS vehicle 
during offloading. On the one hand, \myalg~must be able to \emph{correct} the 
control actions provided by $\nn_c$ after an offload decision (see explanation 
above). On the other hand, \myalg~must limit the duration it waits for each 
offloading request, so that actions provided by $\nn_c$ \emph{can} be corrected 
in the first place; i.e., among all possible offloading delays, $\Delta$, it is 
not immediate which may be corrected and which may not (loosely speaking, 
$\Delta = \infty$ is one delay that likely cannot be  corrected). In this 
sense, knowing that a particular response-delay, $\Delta^\prime$ is correctable 
essentially characterizes how to take an offloading decision, since it provides 
an  \textbf{\itshape expiration on safety}: i.e., proceed to offload, and wait 
for a response until $\Delta^\prime$ samples have elapsed -- at which point 
resume local evaluation of $\nn_c$.

In particular, 
\myalg~has two central components to achieve these ends:
\begin{enumerate}[label={\bfseries C\arabic*:}]
	\item \textbf{Controller Shield.} \myalg~contains a controller shield  
		(see Section \ref{sub:barrier_functions_and_shielding}), which ensures 
		that safety is maintained irrespective of offloading-delayed controller 
		outputs; in other words, it corrects unsafe behavior of $\nn_c$ that 
		results from unanticipated (to $\nn_c$) changes in vehicle state during 
		offloading delays.

	\item \textbf{Runtime Safety Monitor.} \myalg~contains a runtime safety 
		monitor that provides the ADS an upper bound, $\Delta_\text{max}$ (in  
		samples), on how long it should wait for a response to one of its 
		offloading requests to maintain safety, \emph{assuming no updates to  
		the control action in the meantime}; i.e., provided the offload delay 
		is $\Delta \leq \Delta_\text{max}$, then \textbf{C1}, the controller 
		shield, can guarantee safe recovery after holding the last control 
		signal update through offload delay period (\textbf{C1} may use 
		on-vehicle computation if necessary). \textbf{\itshape In other  words, 
		}$\Delta_\text{max}$ \textbf{\itshape specifies and expiration  for the 
		safety guarantee provided by} \textbf{C1} \textbf{\itshape  using 
		on-vehicle computation.}
		% \emph{In particular, 
		% $\Delta_\text{max}$ is  corrected to account for local take-over before 
		% safety is lost.}
\end{enumerate}
\noindent  Naturally, \textbf{C1} and \textbf{C2} need to be designed together, 
since their objectives are mutually informed. Indeed, in the specific design of 
\myalg, these components are designed from the same ZBF (defined in Section  
\ref{sub:barrier_functions_and_shielding}): see Section \ref{sec:shieldnn} for 
formal details.
% Crucially, though, the interaction of \textbf{C1} and 
% \textbf{C2} specifies \textbf{{\itshape how to make an offloading decision}}:
% \begin{itemize}
% 	\item If $\Delta_\text{max}$ returned by \textbf{C2} is larger than one 
% 		sample, then offloading may be attempted; otherwise, 

% 	\item if the experienced edge response time is $\Delta \leq  
% 		\Delta_\text{max}$, then local evaluation of $\nn_c$ -- together with 
% 		\textbf{C2} -- will be safe; 
% \end{itemize}

Unfortunately, neither of the components \textbf{C1} nor \textbf{C2} can 
operate effectively on the same raw sensor measurements, $y[n]$, used by the  
controller; this is especially true given our intention to implement them via 
ZBFs and controller shields. In particular, both require some (limited) state 
information about the ADS in order to perform their tasks. Thus, 
\myalg~requires a perception/estimator component to provide state information 
to \textbf{C1} and \textbf{C2}. Note that we deliberately exclude the design of 
such an estimator from the \myalg~framework in order to provide additional 
flexibility: in particular, since the controller $\nn_c$ may effectively 
contain an estimator, we wish to allow for estimation to be offloaded, too -- 
provided it is executed locally just-in-time before informing \textbf{C1} and 
\textbf{C2} (see Section \ref{sub:analysis_of_energyshield_offloading}). 
Nevertheless such an estimator is necessary for \myalg, so we include it as 
component:
\begin{enumerate}[label={\bfseries C\arabic*:}]
	\setcounter{enumi}{2} %
	\item \textbf{State Estimator.} \myalg~requires (minimal) state 
		estimates be provided as input to \textbf{C1} and \textbf{C2}. By 
		convention, this estimator will be a NN denoted by $\nn_p : y  \mapsto  
		x$. \emph{We assume that $\nn_p$ can be computed by on-vehicle  
		hardware in one sample period.}
\end{enumerate}
\noindent The interface of \textbf{C3} with both \textbf{C1} and \textbf{C2} 
makes the latter two components (state-)context aware. \emph{That is, 
\myalg~makes {\bfseries context-aware offloading decisions} based on the 
current vehicle state.} Moreover, it is important to note that since the prior 
control action will be held during offload $\Delta_\text{max}$, the output of 
\textbf{C2}, control in addition to state dependent: that is, \textbf{C2} 
actually produces an output $\Delta_\text{max}(\hat{x}, u)$ for  (arbitrary) 
state $\hat{x}$ and the control $u$ applied just before offload.

\myalg~has one further important component, but one that is motivated purely by 
energy savings with no effect on safety. Crucially the known expiration of 
safety provided by \textbf{C2}, i.e. $\Delta_\text{max}(x)$, affords the 
opportunity to use additional information in making an offload decision. In 
particular, an estimate of the anticipated edge response time, $\hat{\Delta}$, 
can be used to \emph{forego} offloads that are unlikely to complete before the 
expiration of the safety deadline, $\Delta_\text{max}(x)$. For this reason, 
\myalg~contains an estimator of edge response time to preemptively skip 
offloads that are likely to fail:
\begin{enumerate}[label={\bfseries C\arabic*:}]
	\setcounter{enumi}{3} %
	\item \textbf{Edge-Response Estimator.} \myalg~specifies that an 
		estimate of the current edge response time, $\hat{\Delta}$, is provided 
		to inform offloading decisions.
\end{enumerate}
We note, however, that \emph{\myalg~doesn't specify a particular estimator to 
be used in this component}: any number of different estimators may be 
appropriate, and each estimator may lead to different energy profiles depending 
on its efficacy. Moreover, since $\hat{\Delta}$ is never used to override 
$\Delta_\text{max}(x)$, safety is preserved irrespective of the specific 
estimator used.

\begin{figure}[!tbp]
\centering
{\includegraphics[,width = 0.38\textwidth]{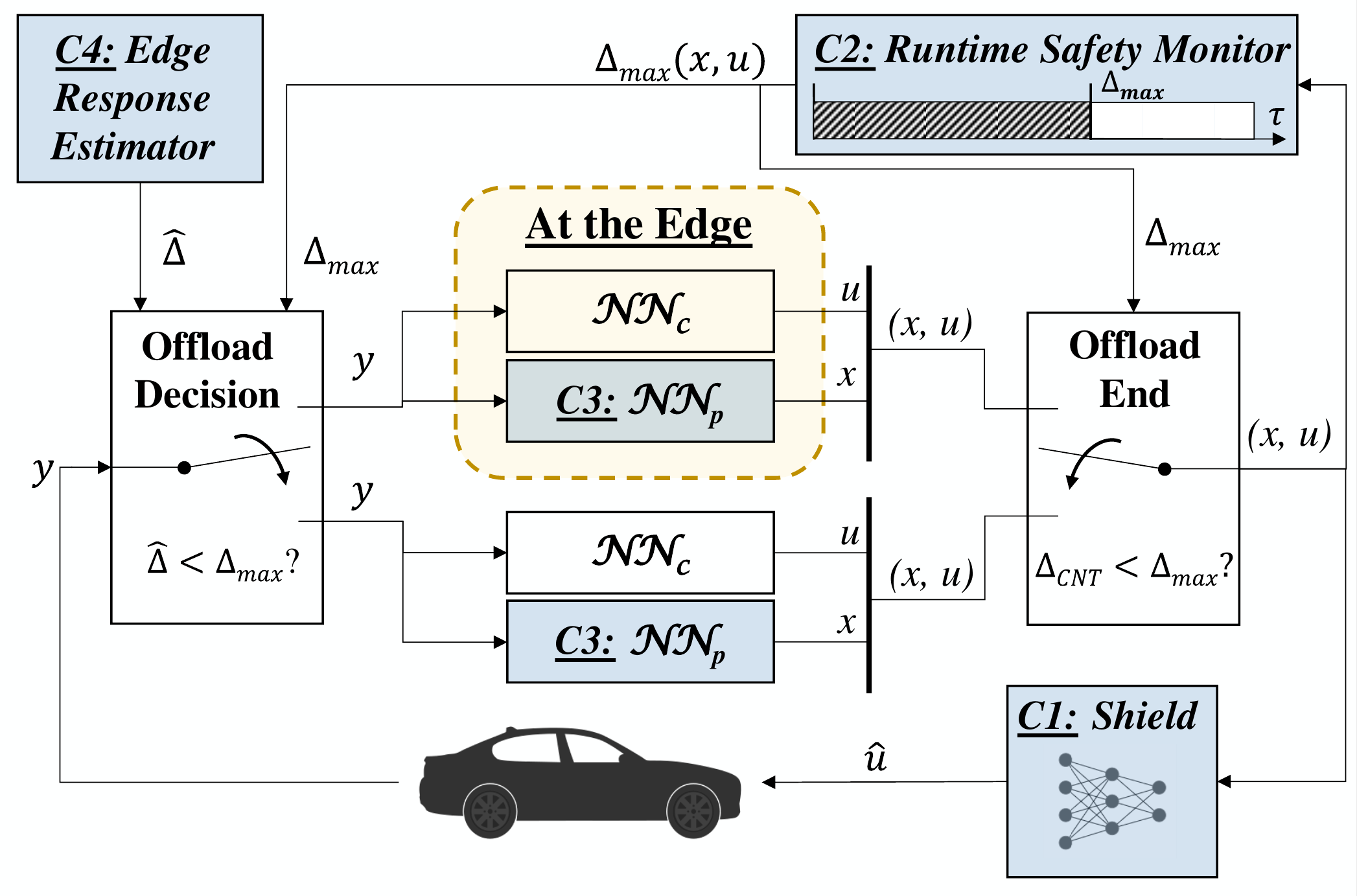}}
\vspace{-2ex}
\caption{EnergyShield Framework}
\label{fig:EnergyShield_block}
% \vspace{-3.5ex}
\end{figure}

The interconnection of the components \textbf{C1} through \textbf{C4} in 
\myalg~is illustrated in Figure \ref{fig:EnergyShield_block}. Note that component \textbf{C3}, the 
state estimator, is connected to components \textbf{C1} and \textbf{C2}, the 
controller shield and safety runtime monitor, respectively. Also note that the 
output of \textbf{C2} provides a signal $\Delta_\text{max}(x)$ to the 
offloading decision switch; also informing that decision is the estimate of 
immediate edge-response times provided by component \textbf{C4}.

% subsection energyshield_structure (end)

\subsection{Semantics of an \myalg~Offloading Decision} % (fold)
\label{sub:analysis_of_energyshield_offloading}
In this subsection, we consider the timeline of a single, hypothetical 
\myalg~offloading decision to illustrate the details of the interplay between 
the components described in Section \ref{sub:energyshield_structure}. In 
particular, suppose that an offloading interval has just been completed, and at 
time index $n_0$ a new offloading decision is to be taken.

We call the time between the initialization of an offloading decision and the 
time that offloading decision has been resolved an \textbf{\itshape  offloading 
period} (the resolution is either by a response from the edge or a fail-over to 
on-vehicle compute). \\[-5pt]

\noindent \emph{Timeline:}\\[-5pt] %
\hrule %
\vspace{2pt} %
\hrule \vspace{2pt}

\noindent \timeIdx{n_0-1} \underline{\emph{Last time index of previous offload 
period.}} % 

\begin{itemize}
	\item Assumption: $\hat{x}[n_0] \triangleq x[n_0 - 1] = \nn_p( y[n_0 - 
		1])$ is always computed locally in the last sample of the previous 
		offloading period.
\end{itemize}

\noindent \timeIdx{n_0} \underline{\emph{Initial time index of new offload  
period.}}

\begin{itemize}
	\item The first sample of the new offload period inherits a locally 
		computed $\hat{x}[n_0]$ from the previous offloading period.

	\item $\hat{x}[n_0]$ is provided to \textbf{C1} to correct $u[n_0]$ as 
		calculated by the previous offloading. Let this correct control action 
		be $\hat{u}[n_0]$

	\item Fix control action applied at $n_0$: i.e. $u_\text{0-h} =  
		\hat{u}[n_0]$, with $\hat{u}[n_0]$ as calculated above.

	\item $\hat{x}[n_0]$ and $u_\text{0-h}$ are provided to \textbf{C2}, the 
		runtime safety monitor, to generate  
		$\Delta_\text{max}(\hat{x}[n_0],u_\text{0-h})$

	\item \textbf{C4} generates an estimate for the edge response time,  
		$\hat{\Delta}$ based on all packets exchanged so far.

	\item \textbf{\itshape Offloading Decision:} 

		\begin{itemize}
			\item \emph{Proceed with offload, if:} $\hat{\Delta} \leq 
				\Delta_\text{max}(\hat{x}[n_0],u_\text{0-h})$ \textbf{AND} \\ 
				$\Delta_\text{max}(\hat{x}[n_0],u_\text{0-h}) \geq 1$; i.e., 
				proceed to transmit $y[n_0]$ to the edge. Initialize offload 
				duration counter:  $\Delta_\text{cnt} = 1$

			\item \emph{Otherwise, terminate offload period, and use local 
				fail-safe.} Skip to \textbf{\itshape Unsuccessful  Offload} 
				with 
				$\Delta_\text{max}(\hat{x}[n_0],u_\text{0-h})\negthickspace =  
				\negthickspace 1$.
		\end{itemize}
\end{itemize}

\noindent \timeIdx{\vdots} ~\\

\noindent \timeIdx{n_0 + \Delta_\text{cnt}} \underline{\emph{Offload in  
progress; no edge response and $\Delta_\text{cnt} < $}} %

\hspace{38pt} \underline{\emph{$\Delta_\text{max}(\hat{x}[n_0])$}}

\begin{itemize}
	\item Maintain zero-order hold of $u[n_0 + \Delta_\text{cnt}] =  
		u_\text{0-h}$.

	\item Increment $\Delta_\text{cnt}$: $\Delta_\text{cnt} \leftarrow  
		\Delta_\text{cnt} + 1$.
\end{itemize}

\noindent \timeIdx{\vdots} ~\\

\noindent \textbf{Now, the current offload period ends in one of two ways:}
% \begin{enumerate}[label=\textbf{\itshape \Roman*)}]
% 	\item \textbf{\itshape Successful offload:} $\Delta = \Delta_\text{cnt} 
% 		\leq \Delta_\text{max}(\hat{x}[n_0])$; i.e., the edge responds before 
% 		the safety of the vehicle ``expires''.

% 	\item \textbf{\itshape Unsuccessful offload:} $\Delta_\text{cnt} = 
% 		\Delta_\text{max}(\hat{x}[n_0])$ with no response from the edge; 
% 		offload period terminates with local fail-safe execution.
% \end{enumerate}
\noindent \textbf{\itshape I) Successful Offload:} \emph{(resume timeline from 
$n_0 + \Delta_\text{cnt}$ )} %

\noindent \timeIdx{\vdots} ~\\

\noindent \timeIdx{n_0 + \Delta} \underline{\emph{Edge response received;  
$\Delta = \Delta_\text{cnt} \leq \Delta_\text{max}(\hat{x}[n_0],u_\text{0-h})$}} %

\begin{itemize}
	\item Apply received control action $u[n_0 + \Delta] = \nn_c(y[n_0])$.

	\item Initiate local evaluation of $\nn_c$ for next time interval.

	\item Initiate local evaluation of $\nn_p$ for next time interval.

	\item $n_0 - \Delta$ becomes time $n_1 - 1$ for the starting index of 
		the next offload period. (See $n_0 - 1$ time index above.)
\end{itemize}

\noindent \textbf{\itshape II) Unsuccessful Offload:} \emph{(resume timeline 
from $n_0 + \Delta_\text{cnt}$)} %

\noindent \timeIdx{\vdots} ~\\

\noindent \timeIdx{n_0 + {\scriptstyle \Delta_\text{max}}} \underline{\emph{No 
edge response received, and safety expired;}} %

\hspace{38pt} \underline{\emph{ $\Delta_\text{cnt} = 
\Delta_\text{max}(\hat{x}[n_0],u_\text{0-h})$}}

\begin{itemize}
	\item Maintain original zero-order hold control \\
	$u[n_0 + 
		\Delta_\text{max}(\hat{x}[n_0],u_\text{0-h})] = u_\text{0-h}$ (there 
		have been no  updates.)

	\item Initiate local evaluation of $\nn_c$ for next time interval.

	\item Initiate local evaluation of $\nn_p$ for next time interval.

	\item $n_0 - \Delta_\text{max}(\hat{x}[n_0])$ becomes time $n_1 - 1$ for 
		the starting index of the next offload period. (See $n_0 - 1$ time 
		index above.) 
\end{itemize}

\hrule \vspace{2pt} \hrule

~\\[-5pt]

In particular, note two crucial facts. First, if \textbf{C2} returns 
$\Delta_\text{max}(\hat{x},u) = 0$, then it effectively forces pure on-vehicle 
evaluation of $\nn_c$ and $\nn_p$. Second, we ensured that an up-to-date 
estimate of the state is always available for both \textbf{C1} and \textbf{C2} 
before they have to act.

% subsection analysis_of_energyshield_offloading (end)

% section framework (end)

%

%
% !TEX root = ./main.tex

\section{\myalg: Provably Safe Offloading} % (fold)
\label{sec:shieldnn}

\subsection{Main Formal Result} % (fold)
\label{sub:formal_result}

\subsubsection{Formal Assumptions} % (fold)
\label{ssub:formal_assumptions}
We begin this section with a list of formal assumptions about the ADS. These 
are largely based around the structure of \myalg, as described in Section 
\ref{sec:framework}.

\begin{assumption}[ADS Safety]
\label{as:ads_safety}
Consider a fixed point in the plane as a stationary obstacle to be avoided by the ADS, and a 
disk or radius $\bar{r}$ around the origin to be a set of unsafe states; see 
Figure \ref{fig:obstacle_diagram}.
\end{assumption}

\begin{assumption}[ADS Model]
\label{as:ads_model}
Let Assumption \ref{as:ads_safety} hold. Thus, suppose that the ADS vehicle is 
modeled by the KBM dynamics in \eqref{eq:kbm_dynamics}. Suppose further that 
interactions with this model happen in discrete time with zero-order hold. Let 
$T$ be the sampling period. 
%, so that sampled versions of continuous signals 
%are obtained by sampling at times $nT$, $n \in \mathbb{N}$; e.g., $\chi(nT)$.
\end{assumption}

\begin{assumption}[ADS Sensors]
\label{as:ads_sensors}
Let Assumptions \ref{as:ads_safety} and \ref{as:ads_model} hold. Suppose the  
KBM-modeled ADS has access to samples of a sensor signal, $s[n] \in 
\mathbb{R}^N$, at each discrete time step, and there is a perception NN, 
$\nn_\text{p} :  s[n] \mapsto \chi[n]$, which maps the sensor signal at each 
discrete time to the (exact) KBM state at the same time instant, $\chi[n]$.
\end{assumption}

\begin{assumption}[ADS Control]
\label{as:ads_control}
Let Assumptions \ref{as:ads_safety} - \ref{as:ads_sensors} hold. Suppose this 
KBM-modeled ADS vehicle has a NN controller, $\nn_c : s \mapsto 
\omega$, which at each sample has access to the sensor measurement $s$.
\end{assumption}

\begin{assumption}[ADS Local Computation]
\label{as:ads_compute}
Let Assumptions \ref{as:ads_safety} - \ref{as:ads_control} hold. Suppose that 
the output of $\nn_p$ and $\nn_c$ can be computed by ADS on-vehicle hardware in 
less than $T$ seconds -- i.e., not instantaneously. Thus, suppose that the 
control action is obtained with a one-sample computation delay when using 
on-vehicle hardware: i.e., the control action applied at sample $n+1$ is 
$\omega[n+1] = \nn_c(s[n])$.
\end{assumption}

% subsubsection formal_assumptions (end)

\subsubsection{Component Design Problems} % (fold)
\label{ssub:component_design_problems}
There are two central problems that need to be solved: i.e., corresponding to 
the design of the two main components of \myalg, \textbf{C1} and  \textbf{C2} 
(see Section \ref{sec:framework}).

The solutions to these problems are deferred to Sections  
\ref{sub:es_kbm_controller_shield} and Section  
\ref{sub:es_kbm_run_time_monitor}, respectively. We state them here in order to 
facilitate the statement of our main result in the next subsection.

\begin{problem}[Controller Shield Design \textbf{(C1)}]
\label{prob:shield_design}
Let Assumptions \ref{as:ads_safety} - \ref{as:ads_compute} hold. Then the 
problem is to design: first, design functions $h$ and $\alpha$ such that they 
constitute a ZBF for the KBM (see); and then using this ZBF, design a 
controller shield, $\mathfrak{S}$ for the KBM model. The resulting controller 
shield must have the following additional property for \textbf{\itshape a 
discrete-time version of the KBM} with zero-order-hold inputs:
\begin{itemize}%[label={\itshape (\roman*)}]
	\item Let $\chi[n_0 - 1]$ and $\chi[n_0]$ be KBM states such that 
		$h(\chi[n_0 - 1]), h(\chi[n_0]) > 0$, and let $\chi[n_0]$ result from a 
		feasible input $\hat{\omega}[n_0-1]$ applied in state $\chi[n_0-1]$. 
		Then the control action
		\begin{equation}
			\hat{\omega}[n_0] = \mathfrak{S}(\chi[n_0 - 1], \omega[n_0])
		\end{equation}
		must yield a state $\chi[n_0 + 1]$ such that $h(\chi[n_0+1]) > 0$; 
		i.e., the controller shield preserves safety under discretization of 
		the KBM and one-step estimation delay (associated with $\nn_p$), as in 
		the case of no computations being offloaded.
\end{itemize}

%  and let  
% $\chi[n_0]$ be a safe state of the ADS' KBM model (possibly to be further 
% restricted later) at an arbitrary time sample. Then the problem is twofold:
% \begin{enumerate}[label={\itshape (\roman*)}]
% 	\item Design a controller shield $\mathfrak{S}$ for the KBM model such 
% 		that the KBM model is safe for all $n \geq n_0$ (as specified in 
% 		Assumption \ref{as:ads_safety}) under continuous local execution of the 
% 		control and perception networks $\nn_c$ and $\nn_p$ (as specified in 
% 		Assumption \ref{as:ads_compute}); and

% 	\item Design a function $\Delta : \mathbb{R}^3 \rightarrow \mathbb{N}$ 
% 		that returns for each KBM state $\chi$, a finite number of samples such 
% 		that if $\Delta(\chi[n_0]) \geq 2$:

% 		\begin{itemize}%[label={\itshape (ii-\alph*)}]
% 			\item Let $\hat{\omega} = \omega[n]$ be the control applied at 
% 				sample $n$. Then the sequence of states $\hat{\chi}[n_0 + 1], 
% 				\dots,  \hat{\chi}[n_0 + \Delta(\chi[n]) - 1]$ obtained by 
% 				applying the constant (zero-order hold) control sequence  
% 				$\hat{\omega}, \hat{\omega}, \dots$ for the same duration must 
% 				be safe (in the sense of Assumption \ref{as:ads_safety})

% 		\end{itemize}
% \end{enumerate}

\end{problem}

\begin{problem}[Runtime Safety Monitor Design \textbf{(C2)}]
\label{prob:safety_monitor}
Let Assumptions \ref{as:ads_safety} - \ref{as:ads_compute} hold, and assume 
that $h$, $\alpha$ and $\mathfrak{S}$ solve Problem \ref{prob:shield_design}. 
Then the problem is to design a runtime safety monitor:
\begin{equation}
	\Delta_\text{max} : \mathbb{R}^3 \times \Omega_\text{admis.} \rightarrow \mathbb{N}
\end{equation}
with the following property:

\begin{itemize}
	\item Let $\chi[n_0-1]$ be such that $h(\chi[n_0-1]) > 0$. Then for 
		constant control, $\omega = \omega[n_0]$, applied to the discretized 
		KBM starting from $\chi[n_0-1]$ the following is true:
		\begin{equation}\label{eq:discrete_runtime_safety_cond}
			\forall n = 0, \dots, \Delta_\text{max}(\chi[n_0-1], \omega[n_0]) ~ . ~
			h( \chi[n_0 - 1 + n] ) > 0
		\end{equation}
		i.e. the constant control $\omega = \omega[n_0]$ preserves safety for 
		at least $\Delta_\text{max}(\chi[n_0-1], \omega[n_0])$ samples  
		from state $\chi[n_0-1]$.
\end{itemize}
(The delay in $\chi[n_0-1]$ accounts for the computation time of  $\nn_p$.)
\end{problem}

% subsubsection component_design_problems (end)

\subsubsection{Main Result} % (fold)
\label{ssub:main_result}
Given solutions to these problems, it is straightforward to conclude our main 
formal result.
\begin{theorem}
	Let Assumptions \ref{as:ads_safety} - \ref{as:ads_compute} hold, and  
	assume a ZBF for the KBM dynamics, using which Problem 
	\ref{prob:shield_design} and Problem \ref{prob:safety_monitor} can be 
	solved.

	Then the offloading policy described in Section 
	\ref{sub:analysis_of_energyshield_offloading} preserves safety of the 
	KBM-modeled ADS (Assumptions \ref{as:ads_safety} and \ref{as:ads_model}).
\end{theorem}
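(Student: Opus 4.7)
The plan is to prove the theorem by induction on the sequence of offloading periods induced by the procedure in Section~\ref{sub:analysis_of_energyshield_offloading}. The inductive invariant I would carry is: at the terminal sample $n_0 - 1$ of every completed offloading period, $h(\chi[n_0-1]) > 0$, and the locally computed estimate $\hat{\chi}[n_0] = \chi[n_0-1]$ together with a freshly computed candidate control $\omega[n_0] = \nn_c(s[n_0-1])$ are available for the next period --- both facts are built into the timeline by construction (the last sample of each period always triggers local evaluations of $\nn_p$ and $\nn_c$). The base case reduces to the assumed initial safety of the ADS, together with a one-time shield correction of the very first control action via Problem~\ref{prob:shield_design}. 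The inductive step must then show that each new period starting at $n_0$ ends at some sample $n_1 - 1$ with the invariant restored.

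For the inductive step, fix a period at $n_0$ and apply the shield to obtain $u_\text{0-h} = \mathfrak{S}(\chi[n_0-1], \omega[n_0])$; by the inductive hypothesis $h(\chi[n_0-1]) > 0$, and $\chi[n_0]$ was reached under a feasible (shield-corrected) input applied in the previous period, so Problem~\ref{prob:shield_design} yields $h(\chi[n_0+1]) > 0$. This alone closes the induction in the degenerate cases where the policy opts not to offload (e.g., $\Delta_\text{max}(\chi[n_0-1], u_\text{0-h}) = 0$ or the edge-response estimate $\hat{\Delta}$ is too large): in those samples \textbf{C1} acts at every step just as in the all-local regime, and the next period begins at $n_0 + 1$ with the invariant in place. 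When the policy does issue an offload, the constant control $u_\text{0-h}$ is held for $\Delta_\text{cnt} \leq \Delta_\text{max}(\chi[n_0-1], u_\text{0-h})$ samples, and Problem~\ref{prob:safety_monitor} guarantees $h(\chi[n_0 - 1 + n]) > 0$ over the entire interval $n = 0, \ldots, \Delta_\text{cnt}$. This covers both resolutions of the period --- a successful response at $n_0 + \Delta$ or a timeout at $n_0 + \Delta_\text{max}$ --- and the sample at which the period closes provides $h > 0$ and a just-in-time locally computed estimate, reestablishing the invariant. Because $\hat{\Delta}$ from component \textbf{C4} is only ever used to \emph{skip} offloads, never to extend the deadline past $\Delta_\text{max}$, it cannot invalidate any safety conclusion, so that estimator is safety-transparent.

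The hardest part is the careful alignment of the two qualitatively different safety guarantees at the boundaries between offloading periods. Problem~\ref{prob:shield_design} supplies a single-step safety conclusion under a one-sample estimation delay and requires both $\chi[n_0-1]$ and $\chi[n_0]$ to already be safe; Problem~\ref{prob:safety_monitor} supplies a multi-step, zero-order-hold conclusion indexed along $n_0 - 1 + n$. Reconciling these offset indexings and verifying that the sample where one period ends is exactly the sample where the next invariant must hold requires a small but nontrivial bookkeeping argument, especially in the boundary case $\Delta_\text{cnt} = \Delta_\text{max}$. One must either argue that the conclusion range of Problem~\ref{prob:safety_monitor} covers the first sample of the subsequent period, or weave in the shield's single-step guarantee at the transition so that the two results compose cleanly. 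Once this interfacing is settled and the continuous-time safety between samples is inherited from the underlying ZBF property (so that avoidance of the obstacle disk holds at all times, not merely at sample points), the induction closes and safety is preserved for all time.
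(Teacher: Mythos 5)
Your proposal takes essentially the same approach as the paper: the paper's proof is a short construction sketch that invokes Problem~\ref{prob:safety_monitor} to cover the interior of each offload period under the held control and Problem~\ref{prob:shield_design} to restore safety at the transition to the next period, which is exactly the decomposition your induction formalizes. If anything, your write-up is more careful than the paper's own two-sentence argument, since you explicitly flag the off-by-one alignment between the index range $n_0-1+n$ of \eqref{eq:discrete_runtime_safety_cond} and the two-state precondition of Problem~\ref{prob:shield_design} at period boundaries, a bookkeeping point the paper silently elides.
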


\begin{proof}
	The proof follows largely by construction. Each offload period is limited 
	in duration by the runtime safety monitor; thus, a safety monitor that 
	solves Problem \ref{prob:safety_monitor} will ensure safety under the 
	specified constant control action during the offload period. Then by the 
	additional property of the controller shield in Problem 
	\ref{prob:shield_design}, safety can be maintained after the offloading 
	period ends: i.e., either by a deciding to perform a new offload if there 
	remains significant safety margin, or by executing the shield in closed 
	loop because there is no offload safety margin.
\end{proof}

\begin{corollary}
	Let Assumptions \ref{as:ads_safety} - \ref{as:ads_compute} hold, and 
	consider the ZBF for the KBM dynamics specified in Section  
	\ref{sub:kbm_controller_shield}. Then the controller shield in Section 
	\ref{sub:es_kbm_controller_shield} uses this ZBF and solves Problem 
	\ref{prob:shield_design}; likewise, the runtime monitor in Section 
	\ref{sub:es_kbm_run_time_monitor} uses this ZBF and solves Problem 
	\ref{prob:safety_monitor}. Hence, our implementation of \myalg~is safe.
\end{corollary}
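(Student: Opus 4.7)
The plan is to dispatch the corollary by (i) instantiating the hypothesis of the preceding theorem with the specific ZBF $h_{\bar{r},\sigma}$ and class $\mathcal{K}$ function $\alpha_{v_\text{max}}$ from Section~\ref{sub:kbm_controller_shield}, and (ii) verifying \emph{separately} that the controller shield to be constructed in Section~\ref{sub:es_kbm_controller_shield} solves Problem~\ref{prob:shield_design} and that the runtime monitor to be constructed in Section~\ref{sub:es_kbm_run_time_monitor} solves Problem~\ref{prob:safety_monitor}. The first ingredient, namely that $(h_{\bar{r},\sigma}, \alpha_{v_\text{max}})$ is a valid ZBF for the KBM under an appropriate choice of $\sigma$, is already established in \cite{ferlez2020shieldnn}; I would cite that result as a black box rather than reprove it. What remains is therefore only the two component-level verifications, after which the corollary follows by direct invocation of the theorem.

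For Problem~\ref{prob:shield_design}, the plan is to reduce the discrete-time safety-preservation statement to the continuous-time ZBF inequality that $\mathfrak{S}_\text{KBM}$ already enforces. Starting from $\chi[n_0-1]$ with $h(\chi[n_0-1])>0$, I would first bound the one-sample estimator-delay drift $\chi[n_0]-\chi[n_0-1]$ using Lipschitz continuity of $f_\text{KBM}$ on $\mathcal{D}\times\Omega_\text{admis.}$ together with the sample period $T$. Because the shield is constructed so that $\hat{\omega}[n_0]=\mathfrak{S}_\text{KBM}(\chi[n_0-1],\omega[n_0])\in R_{h,\alpha}(\chi[n_0-1])$, the pointwise inequality $\nabla_\chi h(\chi)\cdot f_\text{KBM}(\chi,\hat{\omega}[n_0])\geq -\alpha(h(\chi))$ holds along the zero-order-hold trajectory on $[n_0 T,(n_0+1)T]$, and a Gr\"onwall-type comparison then yields $h(\chi[n_0+1])>0$. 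This is essentially the zero-offload analysis already carried out in \cite{ferlez2020shieldnn}, so the corollary-level work is to confirm that neither the discretization nor the one-step estimator delay invalidates the argument.

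For Problem~\ref{prob:safety_monitor}, the plan is to exploit the closed forms of both $h_{\bar{r},\sigma}$ and the KBM trajectories under a constant zero-order-hold input $\omega_\text{0-h}$. Given $(\chi[n_0-1],\omega_\text{0-h})$, one can over-approximate $h_{\bar{r},\sigma}\circ \zeta^{n_0-1,\chi[n_0-1]}_{\mathbf{1}_{\omega_\text{0-h}}}$ along the forward trajectory driven by the constant controller; the explicit expression \eqref{eq:r_min}, combined with the monotone-in-$r$ and symmetric-in-$\xi$ structure of $h_{\bar{r},\sigma}$, lets me produce a sound lower bound on $h(\chi[n_0-1+n])$ by tracking only the worst-case shrinkage of $r$ and the worst-case growth of $|\xi|$ under $\omega_\text{0-h}$. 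Defining $\Delta_\text{max}(\chi[n_0-1],\omega_\text{0-h})$ as the largest $n$ for which this lower bound remains positive immediately yields \eqref{eq:discrete_runtime_safety_cond}. Because the bound depends only on a handful of scalar parameters, the resulting map can be precomputed and stored as the small lookup table the paper advertises, keeping runtime cost negligible.

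The main obstacle I anticipate is the soundness/tightness tradeoff in Problem~\ref{prob:safety_monitor}: a crude bound is easy to certify but collapses almost immediately, forcing \myalg~to forego nearly every offload and nullifying the energy benefit, whereas a sharper bound requires careful joint tracking of $r$ and $\xi$ under constant control over several samples. The Problem~\ref{prob:shield_design} step is comparatively routine once the ShieldNN result is in hand, so the real proof-engineering effort is concentrated in the construction and analysis of $\Delta_\text{max}$; once both component problems are discharged, the corollary is immediate from the theorem.
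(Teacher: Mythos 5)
Your top-level decomposition---instantiate the Theorem with the ShieldNN ZBF $(h_{\bar{r},\sigma},\alpha_{v_\text{max}})$, cite \cite{ferlez2020shieldnn} for its validity, and discharge Problem~\ref{prob:shield_design} and Problem~\ref{prob:safety_monitor} separately---is exactly how the paper handles this corollary (it is discharged via Lemma~\ref{lem:es_controller_shield} and Lemma~\ref{lem:es_safety_monitor}). The issue is in how you propose to discharge the first component problem.

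For Problem~\ref{prob:shield_design} there is a genuine gap. You argue that since $\mathfrak{S}_\text{KBM}(\chi[n_0-1],\omega[n_0])\in R_{h,\alpha}(\chi[n_0-1])$, the inequality $\nabla_\chi h\cdot f_\text{KBM}\geq-\alpha(h)$ ``holds along the zero-order-hold trajectory,'' and a Gr\"onwall comparison then gives $h(\chi[n_0+1])>0$. Membership in $R_{h,\alpha}$ is a pointwise condition at the single, \emph{stale} state $\chi[n_0-1]$; under zero-order hold the state evolves while the control is frozen, so nothing guarantees the inequality at any later state in the interval, and a Gr\"onwall-type perturbation bound only shows that $h$ stays above some \emph{negative} offset of its initial value---which proves nothing once $h(\chi[n_0-1])$ is small. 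This is precisely why the paper does not use $\mathfrak{S}_\text{KBM}$ unmodified: Lemma~\ref{lem:es_controller_shield} replaces it with $\mathfrak{S}_\text{KBM}^\rho$ in \eqref{eq:mod_shield}, which inflates the barrier by a margin $\rho=\eta+\gamma$ built from $v_\text{max}$, $T$, and the Lipschitz constant of $r_\text{min}$, and---in the boundary layer where $r-\rho$ falls below $r_\text{min}(\xi)$---overrides the control with $\pm\beta_\text{max}$, invoking the ShieldNN result that constant extreme steering preserves safety for \emph{any} duration. Your plan contains no analogue of either the margin or the near-boundary override, so the case of an initial state close to the zero level set of $h$ defeats it; calling this step ``comparatively routine'' inverts where the paper's actual proof effort lies.

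For Problem~\ref{prob:safety_monitor} your route is different from the paper's but legitimate. You propose a geometric over-approximation that tracks the worst-case shrinkage of $r$ and growth of $|\xi|$ under the held control and takes $\Delta_\text{max}$ to be the last sample at which the resulting lower bound on $h$ is positive. Lemma~\ref{lem:es_safety_monitor} instead applies a triangle inequality and the Gr\"onwall bound \eqref{eq:gronwall_conclusion} to reduce everything to solving the single scalar equation \eqref{eq:gronwall_conclusion_statement} for $\nu$, then flooring and subtracting one sample to absorb the $\nn_p$ delay. The paper's bound is coarser but trivially certified (it depends only on $h(\chi[n_0-1])$, $\lVert f_\text{KBM}(\chi[n_0-1],\omega[n_0])\rVert_2$, and two Lipschitz constants) and is monotone in $t$, which makes the LUT implementation immediate; your construction could be tighter but carries the soundness burden you yourself identify, and tightness is not needed for the corollary---only soundness is.
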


% subsubsection main_result (end)

% Formally, Problem \ref{prob:shield_design} and Problem 
% \ref{prob:safety_monitor} should be considered jointly as part of the following 
% meta-problem. This is because, in general, it may be necessary to iterate 
% design of the ZBF until both an acceptable controller shield and  
% (energy-saving) runtime monitor are obtained.

% \begin{problem*}[Meta-problem]
% Let Assumptions \ref{as:ads_safety} - \ref{as:ads_compute} hold. Design a ZBF 
% for the KBM dynamics, using which Problem \ref{prob:shield_design} and Problem 
% \ref{prob:safety_monitor} can be solved.
% \end{problem*}

% subsection formal_assumptions (end)

\subsection{KBM Controller Shield} % (fold)
\label{sub:es_kbm_controller_shield}
Fortunately, we have access to a preexisting ZBF and controller shield designed 
for the KBM: see Section \ref{sub:kbm_controller_shield} 
\cite{ferlez2020shieldnn}. That is, the ZBF is available after using the design 
methodology in \cite{ferlez2020shieldnn} to choose the parameter $\sigma$  (see 
Section \ref{sub:kbm_controller_shield}); for simplicity, we will omit further 
discussion this design process. Thus for this section, we refer to a fully 
implemented controller shield as $\mathfrak{S}_\text{KBM}$, with the 
understanding that it has been designed for the relevant KBM model and safety 
parameter $\bar{r}$ (see Figure \ref{fig:obstacle_diagram}); viz.  Assumptions 
\ref{as:ads_safety} and \ref{as:ads_model}.

Thus, the main requirement of this section is to alter 
$\mathfrak{S}_\text{KBM}$ as necessary so that it satisfies the additional 
property required in Problem \ref{prob:shield_design}. This is the subject of 
the following Lemma, which also defines the alterations in question.

\begin{lemma}
\label{lem:es_controller_shield}
Let Assumptions \ref{as:ads_safety} - \ref{as:ads_compute} hold as usual, and 
let $\mathfrak{S}_\text{KBM}$ be a controller shield designed under these 
assumptions as per Section \ref{sub:kbm_controller_shield}.

Then there exists a $\rho > 0$ such that the following controller shield:
\begin{equation}\label{eq:mod_shield}
	\mathfrak{S}_\text{KBM}^\rho : ((r, \xi, v), \omega) \mapsto
	\begin{cases}
		\mathfrak{S}_\text{KBM}((r \negthickspace -\negthickspace \rho, \xi, v), \omega) & \negthickspace  r\negthickspace -\negthickspace \rho \geq r_{\scriptscriptstyle\text{min}}(\xi) \\
		\beta_\text{max} & \negthickspace  r \negthickspace -\negthickspace  \rho \geq r_{\scriptscriptstyle\text{min}}(\xi) \wedge \xi \geq 0 \\
		-\beta_\text{max} & \negthickspace r \negthickspace -\negthickspace \rho \geq r_{\scriptscriptstyle\text{min}}(\xi) \wedge \xi < 0
	\end{cases}
\end{equation}
solves Problem \ref{prob:shield_design}.
\end{lemma}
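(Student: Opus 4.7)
The plan is to choose $\rho>0$ large enough that the quantity $r-\rho$ acts as a \emph{sound under-approximation} of the true distance to the origin across the one-sample estimation delay plus the one-sample zero-order hold. With such a $\rho$, invoking the continuous-time shield $\mathfrak{S}_\text{KBM}$ on the shifted state $(r-\rho,\xi,v)$ should yield a control action whose closed-loop trajectory stays within $\mathcal{C}_h$ over the subsequent sample, even when evaluated on the true (unshifted) state. The special cases $r-\rho<r_{\scriptscriptstyle\text{min}}(\xi)$ (note: the two latter branches of \eqref{eq:mod_shield} should be read as the complement of the first branch) are then handled by a ``turn hard away'' recovery action $\pm\beta_\text{max}$ with the sign selected to increase $|\xi|$ toward the tangential orientation $\xi=\pm\pi/2$.

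First I would bound, as a function of $T$ and $v_\text{max}$, the maximum decrease in $r$ and the maximum change in $\xi$ across two consecutive sample periods under any admissible $\omega\in\Omega_\text{admis.}$. This follows immediately from \eqref{eq:kbm_dynamics} and the bound $v\le v_\text{max}$: for instance, $|r[n+1]-r[n]|\le v_\text{max}T$ from $|\dot r|\le v$, together with an analogous Lipschitz bound on $\xi$. Using the Lipschitz continuity of $h_{\bar r,\sigma}$ on the relevant compact sublevel set, I can then derive a modulus of continuity $\eta(T)\to 0$ such that if $h(\chi[n_0-1])>0$ and the shield action from Proposition~\ref{prop:feedback_control_action_set} is applied to $\chi[n_0-1]$, the continuous-time invariance $\nabla h^\top f(\chi,\omega)+\alpha(h(\chi))\ge 0$ propagates to the discretized one-step update up to an error of order $\eta(T)$. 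Choosing $\rho$ so that $h_{\bar r,\sigma}(r-\rho,\xi,v)\ge 0$ implies $h_{\bar r,\sigma}(r',\xi',v')\ge 0$ at the next sample for any reachable $(r',\xi',v')$ then yields the first branch of the claim.

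Next I would analyze the boundary branches, where the estimated state $(r-\rho,\xi,v)$ lies in the ``buffer annulus'' $r_{\scriptscriptstyle\text{min}}(\xi)\le r<r_{\scriptscriptstyle\text{min}}(\xi)+\rho$ while the true state still satisfies $h(\chi[n_0-1])>0$ and $h(\chi[n_0])>0$. The claim is that $\text{sgn}(\xi)\cdot\beta_\text{max}$ is a valid safe action: from \eqref{eq:kbm_dynamics}, applying $\beta=\beta_\text{max}$ with $\xi\ge 0$ drives $\xi$ further from $0$ and $\pi$ (the ``pointing at the obstacle'' angles), which monotonically increases $\cos(\xi/2)\mapsto$ the first term of $h_{\bar r,\sigma}$ and therefore relaxes the barrier constraint. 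I would then verify that this choice lies in the safe-control set $R_{h,\alpha}(\chi[n_0-1])$ guaranteed non-empty by the ShieldNN design of $\sigma$, so that the ZBF condition still holds at $\chi[n_0-1]$; combining this with the modulus-of-continuity argument above gives $h(\chi[n_0+1])>0$.

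The main obstacle I anticipate is the interaction between the geometric under-approximation $r\mapsto r-\rho$ and the angular dynamics: shifting $r$ changes $r_{\scriptscriptstyle\text{min}}(\xi)$ implicitly through the branching condition, and one must verify that the worst-case $\dot\xi$ during a zero-order-hold sample does not cause $\xi$ to cross the sign boundary used to select $\pm\beta_\text{max}$ in a way that makes the recovery action unsafe. I would resolve this by picking $\rho$ jointly with a bound on the admissible $\xi$-slew per sample, essentially inflating $\rho$ to absorb both radial and angular discretization errors so that a single constant suffices for all $(\xi,v)\in[-\pi,\pi]\times(0,v_\text{max}]$; the existence of such a finite $\rho$ follows from compactness of this domain and smoothness of $f_\text{KBM}$.
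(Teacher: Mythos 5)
Your treatment of the main branch matches the paper's: both arguments pick $\rho$ to absorb the worst-case two-sample drift (one sample of estimation delay plus one of zero-order hold), bounding the radial change by $2 v_\text{max} T$ and the induced change in the safety margin via a Lipschitz estimate. The paper does this by bounding the Lipschitz constant of $r_\text{min}$ together with a bound on $|\dot\xi|$, arriving at $\rho = \eta + \gamma$; you phrase the same estimate through the Lipschitz constant of $h_{\bar r,\sigma}$, which is equivalent since $h_{\bar r,\sigma}(\chi) \geq 0$ iff $r \geq r_\text{min}(\xi)$. You also correctly read the last two branches of \eqref{eq:mod_shield} as the complement of the first, which is consistent with the case split in the paper's proof.

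The gap is in your justification of the recovery branches. You argue that $\operatorname{sgn}(\xi)\cdot\beta_\text{max}$ drives $\xi$ away from $0$ and $\pi$ and that this ``monotonically increases $\cos(\xi/2)$,'' thereby relaxing the barrier. That is false on half the range: $\cos(\xi/2)$ is decreasing in $|\xi|$ on $[0,\pi]$, so moving $\xi$ away from $0$ toward $\pm\pi/2$ \emph{decreases} the first term of \eqref{eq:bicycle_barrier} and \emph{increases} $r_\text{min}(\xi)$ in \eqref{eq:r_min} --- it tightens the constraint rather than relaxing it. Moreover, $\dot\xi$ under $\beta = \beta_\text{max}$ is not sign-definite in \eqref{eq:kbm_dynamics}, so it is not even clear the angle moves in the direction you assume. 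The paper does not attempt this monotonicity argument: in its Case (iii) it invokes the ShieldNN verifier result \cite[Theorem 1]{ferlez2020shieldnn}, which establishes that the boundary between safe and unsafe controls is concave in $\xi$ on $[0,\pi]$ (convex on $[-\pi,0]$), so that the constant extreme control $\beta_\text{max}$ (resp.\ $-\beta_\text{max}$) preserves safety for \emph{any duration of time} from any state with $\xi \geq 0$ (resp.\ $\xi < 0$) --- which also disposes of your worry about $\xi$ crossing the sign boundary mid-hold. Without importing that theorem (or reproducing its verification), your sketch of the recovery branch does not close; the rest of the proposal is sound and follows the paper's route.
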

\noindent The proof of this Lemma is deferred to Appendix 
\ref{sec:proof_lemma_1}.

A further remark is in order about Lemma \ref{lem:es_controller_shield}. Note 
that the altered controller shield $\mathfrak{S}_\text{KBM}^\rho$ maintains the 
energy efficient implementation of the controller shield 
$\mathfrak{S}_\text{KBM}$ as designed in \cite{ferlez2020shieldnn}; indeed only 
two additional ReLU units are required, which amounts to a trivial 
per-evaluation increase in energy consumption.

% subsection kbm_controller_shield (end)

\subsection{KBM Runtime Safety Monitor: Generating Safety Expiration Times} % (fold)
\label{sub:es_kbm_run_time_monitor}
Recall that the runtime safety monitor of \myalg~must provide an 
\emph{expiration} on the safety of the vehicle during an offload period, 
throughout which only a single, fixed control input is applied. This expiration 
must come with a \emph{provable guarantee} that the vehicle safety is not 
compromised in the interim. In the formulation of \myalg~and Problem 
\ref{prob:safety_monitor}, this means only that $h_{\bar{r},\sigma}$  must 
remain non-negative until the expiration of the deadline provided by the 
runtime safety monitor: see the condition 
\eqref{eq:discrete_runtime_safety_cond} of Problem \ref{prob:safety_monitor}.

This formulation is convenient because it means that the problem can again be 
analyzed in continuous time, unlike our consideration of Problem 
\ref{prob:shield_design} above: the conversion back to discrete time involves a 
floor operation; and compensating for the one-sample state delay induced by 
computing $\nn_p$ involves subtracting one sample from the result. That is, to 
solve Problem \ref{prob:safety_monitor} and design an \myalg-safety monitor, it 
is sufficient provide a (real) time, $\nu$, s.t.:
\begin{equation}\label{eq:continuous_safety_cond}
	\forall t \in [0,\nu] \;.\; h \big( \zeta_{\scriptscriptstyle \mathbf{1}_{\omega[n_0]}}^{\scriptscriptstyle 0,\chi\negthickspace[n_0\negthinspace -\negthinspace 1]}(t) \big) > 0.
\end{equation}
In other words, the flow of $f_\text{KBM}$ started from $\chi[n_0-1]$ and using 
constant control $\omega[n_0]$ maintains $h > 0$ for the duration $[0, \nu]$. 
We emphasize again that such a $\nu$ can be converted into the correct sample 
units expected for the output $\Delta_\text{max}(\chi[n_0-1],\omega[n_0])$ by 
using a floor operation and subtracting one. With this context, we have the 
following Lemma, which solves Problem \ref{prob:safety_monitor}.

\begin{lemma}
\label{lem:es_safety_monitor}
Let Assumptions \ref{as:ads_safety} - \ref{as:ads_compute} hold as usual. Let 
\begin{equation}\label{eq:safety_monitor_formal}
	\Delta_\text{max}(\chi[n_0 - 1, \omega[n_0]) \triangleq  \max( \left\lfloor \nu(\chi[n_0-1],\omega[n_0]) \right\rfloor - 1, 0)
\end{equation}
where $\nu = \nu(\chi[n_0-1],\omega[n_0])$ solves the equation:
\begin{equation}\label{eq:gronwall_conclusion_statement}
	\sqrt{2} \cdot L_{h_{\bar{r},\sigma}} \cdot \lVert f_{\scriptscriptstyle\text{KBM}}(\chi[n_0-1],\omega[n_0]) \rVert_2 \cdot \nu \cdot e^{L_{f_\text{KBM}} \cdot \nu} = h( \chi[n_0 - 1] )
\end{equation}
for $L_{h_{\bar{r},\sigma}}$ and $L_{f_{\scriptscriptstyle\text{KBM}}}$ upper 
bounds on the Lipschitz constants of $h_{\bar{r},\sigma}$ and 
$f_{\scriptscriptstyle\text{KBM}}$, respectively.

Then $\Delta_\text{max}(\chi[n_0 - 1, \omega[n_0])$ so defined solves Problem 
\ref{prob:safety_monitor}.
\end{lemma}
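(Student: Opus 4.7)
The plan is to first establish the continuous-time safety condition \eqref{eq:continuous_safety_cond} for $t \in [0, \nu]$ and then verify that converting $\nu$ to a discrete sample count via $\lfloor \nu \rfloor - 1$ yields the required property \eqref{eq:discrete_runtime_safety_cond}. Because $f_{\text{KBM}}$ and $h_{\bar{r},\sigma}$ are both Lipschitz on the relevant domain, Gr\"onwall's inequality is the natural tool.

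First, I would bound the deviation of the continuous-time flow from its initial state. Let $\zeta(t) = \zeta_{\mathbf{1}_{\omega[n_0]}}^{0, \chi[n_0-1]}(t)$. Integrating $\dot{\zeta}(s) = f_{\text{KBM}}(\zeta(s), \omega[n_0])$ and applying the Lipschitz bound on $f_{\text{KBM}}$'s first argument yields an integral inequality of the form $\|\zeta(t) - \chi[n_0-1]\|_2 \leq \|f_{\text{KBM}}(\chi[n_0-1], \omega[n_0])\|_2 \cdot t + L_{f_{\text{KBM}}} \int_0^t \|\zeta(s) - \chi[n_0-1]\|_2 \, ds$, to which the integral form of Gr\"onwall's inequality gives the closed-form bound $\|\zeta(t) - \chi[n_0-1]\|_2 \leq \|f_{\text{KBM}}(\chi[n_0-1], \omega[n_0])\|_2 \cdot t \cdot e^{L_{f_{\text{KBM}}} t}$.

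Next, I would translate this state deviation into a bound on $h_{\bar{r},\sigma}$ via its Lipschitz constant: $|h_{\bar{r},\sigma}(\zeta(t)) - h_{\bar{r},\sigma}(\chi[n_0-1])| \leq L_{h_{\bar{r},\sigma}} \cdot \|\zeta(t) - \chi[n_0-1]\|$. Because $h_{\bar{r},\sigma}$ does not depend on $v$, only the two-dimensional $(\xi, r)$ subspace matters, and the $\sqrt{2}$ factor in \eqref{eq:gronwall_conclusion_statement} is exactly the norm-equivalence constant relating the paper's max-norm $\|\cdot\|$ in that subspace to the 2-norm arising from Gr\"onwall. Combining the two estimates shows that $h_{\bar{r},\sigma}(\zeta(t)) > 0$ whenever $\sqrt{2}\, L_{h_{\bar{r},\sigma}} \|f_{\text{KBM}}(\chi[n_0-1], \omega[n_0])\|_2 \cdot t \cdot e^{L_{f_{\text{KBM}}} t} < h_{\bar{r},\sigma}(\chi[n_0-1])$. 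Since the map $t \mapsto t \cdot e^{L_{f_{\text{KBM}}} t}$ is strictly increasing from $0$ and unbounded, the defining equation \eqref{eq:gronwall_conclusion_statement} has a unique positive root $\nu$, so $h_{\bar{r},\sigma}(\zeta(t)) > 0$ for all $t \in [0, \nu)$, establishing \eqref{eq:continuous_safety_cond}.

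Finally, I would discretize: $\lfloor \nu \rfloor$ extracts the largest integer number of sample periods over which safety is guaranteed in continuous time, subtracting one absorbs the one-sample state-estimation delay built into $\chi[n_0-1] = \nn_p(y[n_0-1])$ as flagged in the lemma's parenthetical remark, and the clamp at $0$ rules out negative outputs. Condition \eqref{eq:discrete_runtime_safety_cond} then follows because every sample index $n = 0, \dots, \Delta_\text{max}(\chi[n_0-1], \omega[n_0])$ corresponds to a continuous-time instant strictly inside $[0, \nu)$. I expect the main technical obstacle to be the bookkeeping around Lipschitz constants and norm conventions that justifies the exact $\sqrt{2}$ prefactor and the choice of operator norm for $L_{f_{\text{KBM}}}$, rather than the Gr\"onwall step itself, which is routine.
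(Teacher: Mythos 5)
Your proposal is correct and follows essentially the same route as the paper's proof: a Gr\"onwall bound on the deviation of the flow from its initial state, pushed through the Lipschitz constant of $h_{\bar{r},\sigma}$ to show the barrier value cannot drop below zero before time $\nu$, followed by monotonicity of $t \mapsto t e^{L_{f_\text{KBM}} t}$ and the floor-minus-one discretization. The only difference is presentational — you make explicit the intermediate state-deviation estimate and the origin of the $\sqrt{2}$ factor, which the paper compresses into a single cited inequality.
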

\noindent The proof of Lemma \ref{lem:es_safety_monitor} is deferred to Appendix \ref{sec:proof_lemma_2}.

Lemma \ref{lem:es_safety_monitor} specifies a complete solution to Problem 
\ref{prob:safety_monitor}, as claimed. However in its immediate form, it 
requires numerically solving \eqref{eq:gronwall_conclusion_statement} with each 
evaluation of $\Delta_\text{max}( \chi[n_0-1], \omega[n_0] )$; i.e., each time 
a safety expiration time is requested from the runtime safety monitor (every 
sample in the case where the offloading period is terminated before offload).  
The nature of \eqref{eq:gronwall_conclusion_statement} is such that solving it 
numerically is not especially burdensome -- especially compared to the NN 
evaluations it replaces; however, as a practical matter it is also possible to 
implement it as a LUT to achieve greater energy efficiency.

% subsection kbm_run_time_monitor (end)

% section shieldnn (end)

%

%
% !TEX root = ./main.tex

%

\section{Experiments and Findings}
%\red{Yasser:add here a paragraph explaining what exactly we are trying to evaluate. Something to the effect of ``The purpose of this section is to assess (i) the energy savings provided by EnergyShield, (ii) the ability of EnergyShield to enforce the safety of the system, (iii) the effect of XYZ, ...''}.

The purpose of this section is to assess the following key aspects of EnergyShield: (\emph{i}) the extent of energy savings achievable compared to the conventional continuous local execution mode given a wide variety of network conditions and server delays, (\emph{ii}) its ability to enforce the safety through obstacle collision avoidance, (\emph{iii}) how representative the upper bounds of the edge response time ($\Delta_{max}$) are of the inherent risks existing in the corresponding driving context, and (\emph{iv}) its generality when integrated with controllers of distinctive behavioral characteristics, in particular when it comes to different learnt driving policies.

\subsection{Experimental Setup}

\begin{figure}[!tbp]
\centering
{\includegraphics[,width = 0.49\textwidth]{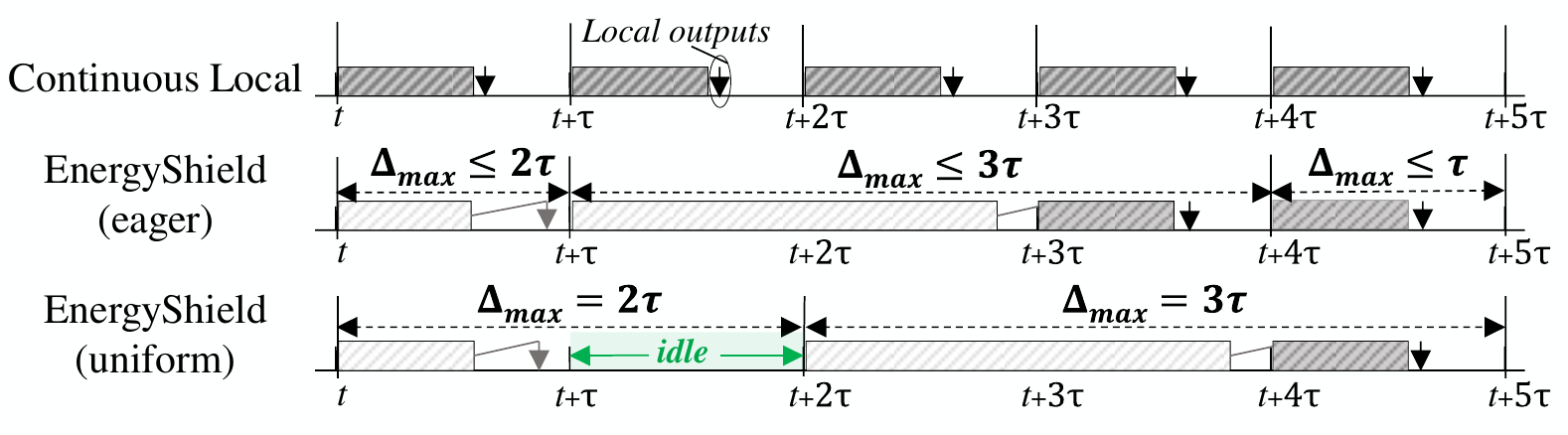}}
\vspace{-5ex}
\caption{The operational policies in our experiments given base time window $\tau$. Darker instances imply local execution. }
\label{fig:offloading_policies}
\vspace{-1ex}
\end{figure}

\subsubsection{Operational Policies: }In addition to the baseline continuous local execution, we designate two EnergyShield offloading modes: 
\begin{itemize}
    \item \textit{\textbf{Eager:}} in which a new offloading interval is immediately started if either edge responses have been received at the ADS or $\Delta_{max}$ expired.
    \item \textit{\textbf{Uniform:}} in which the start of a new offloading interval is always delayed until $\Delta_{max}$ expires, regardless of whether edge responses have been received or not.
\end{itemize}
We define both these modes to reflect the attainable behavioral trade-offs of EnergyShield with regards to realizing an ideal control behavior or maximizing energy efficiency.  This distinction is illustrated through the first offloading interval in Figure \ref{fig:offloading_policies} in which the uniform EnergyShield idles upon its retrieval of the edge responses until $\Delta_{max}$ expires unlike the eager EnergyShield mode.

% demonstrates various offloading policies that cover different combinations. \textit{Continuous local} is the baseline in which control actions are applied every time window. The offloading policies continuously attempt to transmit data to the edge server. The default offloading mode employs: (1) soft deadlines indicating the maximum time before initiating a new offloading session, and (2) eager execution a new transmission session immediately starts once control outputs have been received in the preceding time window. A variant of the former is the fail-safe offload which enforces a hard deadline policy through specifying the last time window of the deadline for local recovery in case server responses were delayed. Whereas a variant of the latter is uniform offloading in which the deadline must expire before initiating a new transmission session. Adaptive policies also share the same sub-modes of operation, with the difference being that the choice to offload is only undergone is there are projected performance gains given the corresponding channel estimates. The Shield policies are the ones which output variable hard deadlines based on the driving context.

%\red{Yasser: the text below looks very similar with the experimental setup in ShieldNN. Please make sure to give the right credit to the repo that created this setup that we are following here. Say something to the effect of ``We are following the same setup in [??]".}\blue{referenced the paper and repo in footnote}

\subsubsection{Experimental Scenario:} We perform our experiments using the CARLA open-source simulator for autonomous driving research \cite{dosovitskiy2017carla}. We follow the setup proposed in \cite{ferlez2020shieldnn}, and implement a similar experimental scenario\footnote{https://github.com/shieldNN/shieldNN2020}. Basically, the scenario involves a four-wheeled vehicle travelling from a starting position A to destination B along a 100m motorway track with 4 pedestrian obstacles in its path. The first obstacle spawns after 40m of the track, while the remaining spawning positions are uniformly spaced between the first obstacle's position and that of the final destination -- with a potential $\pm10m$ variation along the longitudinal axis. 

\subsubsection{Experimental Settings: }Throughout this section, \textit{all} of our experiments are conducted under different combinations of the following two binary configuration parameters: 
\begin{itemize}
    \item \textbf{S}: this binary variable indicates whether the Controller Shield component is \textit{active} (see Section \ref{sub:energyshield_structure}).
    \item \textbf{N}: this binary variable indicates whether this is a more challenging, ``noisy'' version of the experimental test case.
\end{itemize}
%{(\emph{i})  and (\emph{ii}) \textbf{N}; indicating whether this is a more challenging ``noisy'' version of the experimental use case scenario}. 
In particular, the noisy version entails perturbing the obstacles' spawning positions by additional values sampled from a normal distribution $\mathcal{N}(0,1.5m)$ along both the longitudinal and latitudinal axis. For example, the configuration settings (S = 1, N = 0) indicate that the experiment was performed with Controller Shield active and with no perturbations in the obstacles' spawning positions.

% Throughout our experiments, we characterize two key parameters for the experimental settings: $S$ to indicate whether the \textit{safety filter} component of EnergyShield is active, and $N$ to invoke a \textit{noisy} version of the use-case scenario -- one in which the obstacles' spawning positions can vary randomly by additional values sampled from a normal distribution $\mathcal{N}(0,1.5m)$ along both the longitudinal and latitudinal axis. 

%\red{Yasser: make sure to define the symbol $\sim\mathcal{N}$ in the notation section.} \red{Yasser: the definition of $S$ and $N$ is buried inside this paragraph. It need a better highlighting since you are using it a lot in the section. Maybe have them defined on separate lines.s
%} \blue{added separately in 7.1.3} 

\subsubsection{Simulation Setup: }For the controller model, its first stage entails two concurrent modules: an object detector as the large NN model of the ADS and a $\beta$ Variational Autoencoder ($\beta$-VAE) providing additional latent feature representations of the driving scene. Both components operate on 160$\times$80 RGB images from the vehicle's attached front-facing camera. In the subsequent stage, a Reinforcement Learning (RL) agent aggregates the detector's bounding box predictions, latent features, and the inertial measurements ($\delta_f^c$, $v$, and $a$) to predict vehicle control actions (steering angle and throttle). The inertial measurements can be fetched directly from CARLA, whose positional and orientation measurements are also used directly to calculate $r$ and $\xi$ relative the vehicle's current nearest obstacle for obstacle state estimation. % +/-10 m randomization of objects' starting position 
We trained the RL controller agents using a reward function, $\mathcal{R}$, that aims to maximize track completion rates through collision avoidance while minimizing the vehicle's center deviance from the primary track. For the definition of $\mathcal{R}$ as well as details of the RL agents see Appendix \ref{sec:experimental_details}.
%Details on the training of the RL agents and the exact definition of $\mathcal{R}$ are provided in the appendix. 

%\red{Yasser: make sure to define the variables before you use it or just after you use them. I had to read this paragraph till almost its end to understand equation (5) since none of the variables in the equation were introduced until very late in the paragraph. } \blue{rephrased above}

\subsubsection{Performance Evaluations: } We use a pretrained ResNet-152 for our object detector and benchmark its performance in terms of latency and energy consumption when deployed on the industry-grade Nvidia Drive PX2 Autochauffer ADS. We found that a single inference pass on the ResNet-152 took $\approx$ 16 ms, and accordingly, we fixed the time-step in CARLA at 20 ms since the detector-in-the-loop was the simulation's computational bottleneck. To evaluate the wireless transmission power, we use the data transfer power models in \cite{huang2012close} and assume a Wi-Fi communication link.

\subsubsection{Wireless Channel Model: }We model the communication overheads between the ego vehicle and edge server as:
\begin{equation}
    L_{comm} = L_{Tx} + L_{que}\;\; s.t., \; L_{Tx} = \tfrac{data\_size}{\phi},
\end{equation}
where $L_{que}$ represents potential queuing delays at the server whereas $L_{Tx}$ is the transmission latency defined by the size of the transmission data, $data\_size$, over the experienced channel throughput, $\phi$. Here, we assume $\phi$ as the ``effective'' channel throughput experienced at the ego vehicle, which takes into consideration potential packet drops, retransmissions, etc. We leverage a Rayleigh distribution model as in \cite{odema2022testudo} to sample throughput values $\phi \sim Rayleigh(0,\sigma_{\phi})$  with zero mean and $\sigma_{\phi}$ scale (default $\sigma_{\phi}$=20 Mbps). Details on the modelling of queuing delays, $q$, and the server response time estimation (i.e., $\hat{\Delta})$ are provided in Appendix \ref{sec:experimental_details}.

\subsection{EnergyShield Evaluations}

The purpose of this experiment is to assess the controller's performance when supplemented with EnergyShield in terms of energy efficiency and safety. For every configuration of S and N, we run the test scenario for 35 episodes and aggregate their combined results. 

%\red{Yasser: I could not follow the text starting from this point and onwards. The main confusion is that you never defined what exactly you are trying to assess. How do you measure Energy Efficiency? Is it the same like $P_{\text{local}}$? How do you measure "safety"? You jumped directly to the conclusions.}
%\blue{I added explanations at the start of each paragraph on what is the purpose of each specific exp/evaluation}

\textbf{Energy Efficiency:} We first assess the energy efficiency gains offered by EnergyShield compared to the baseline continuous local execution. As illustrated Figure \ref{fig:energy_safety}, the left barplot demonstrates that both modes of EnergyShield substantially reduce the energy consumption footprint of the NN compared to local execution across all S and N configurations. For instance, under the default configuration (S = 0, N = 0), EnergyShield energy reductions reach 20\% and 40.4\% for the eager and uniform modes, respectively. These numbers further improve for the subsequent configurations in which N = 1 or S = 1. Upon inspection, we find that this is the result of the ego vehicle encountering more instances in which obstacles are not in the direct line-of-sight of its heading. The reasons being that at N = 1, some obstacles can be displaced out of the primary lane that the ego vehicle follows to complete the track, whereas at S = 1, such instances result from the Controller Shield applying corrective behaviors on the NN's predicted steering outputs, resulting in more tangential orientations of the vehicle with respect to the obstacles (i.e., $\xi \to \pm \pi/2$). Accordingly, large values of $\Delta_{max}$ -- about 4-5 time samples (equivalent to 80-100 ms) -- are increasingly sampled, and that automatically translates into more offloading decisions. For instance at (S = 1, N = 0), the energy efficiency gains reach $24.3\%$ and $54.6\%$ for the respective eager and uniform modes. 

\textbf{Safety Evaluation: } To assess the EnergyShield's ability to enforce safety, we designate track completion rates (TCR \%) as a comparison metric to signify the proportion of times the vehicle was able to complete the track without collisions. Taking the local execution mode as the test scenario, the right barplot of Figure \ref{fig:energy_safety} shows that without an active Controller Shield (S = 0), collisions with the pedestrian obstacles cause the TCR\% to be 65.7\% at N = 0, and even less at 22.9\% for the noisy test case (N = 1). However, when the Controller Shield is active (S = 1), collisions are completely avoided and the TCR (\%) values jump to 100\% for both cases. This is also visible through the respective improvements in $\mathcal{R}$ which reached 13.3\% and 61.1\%. To further demonstrate such occurrences, we analyze in Figure \ref{fig:trajectories} the ego vehicle's chosen trajectories across 3 episodes of dissimilar (S, N) configurations. As shown, the (S = 0, N = 0) instance incurs a collision with the pedestrian object and does not complete the track. An active Controller Shield (S = 1), however, enforces a left or right corrective maneuvering action for obstacle avoidance and maintaining safety. Pictorial illustrations of these actions are also provided in the Figure. 

\begin{figure}[!tbp]
\centering
{\includegraphics[,width = 0.49\textwidth]{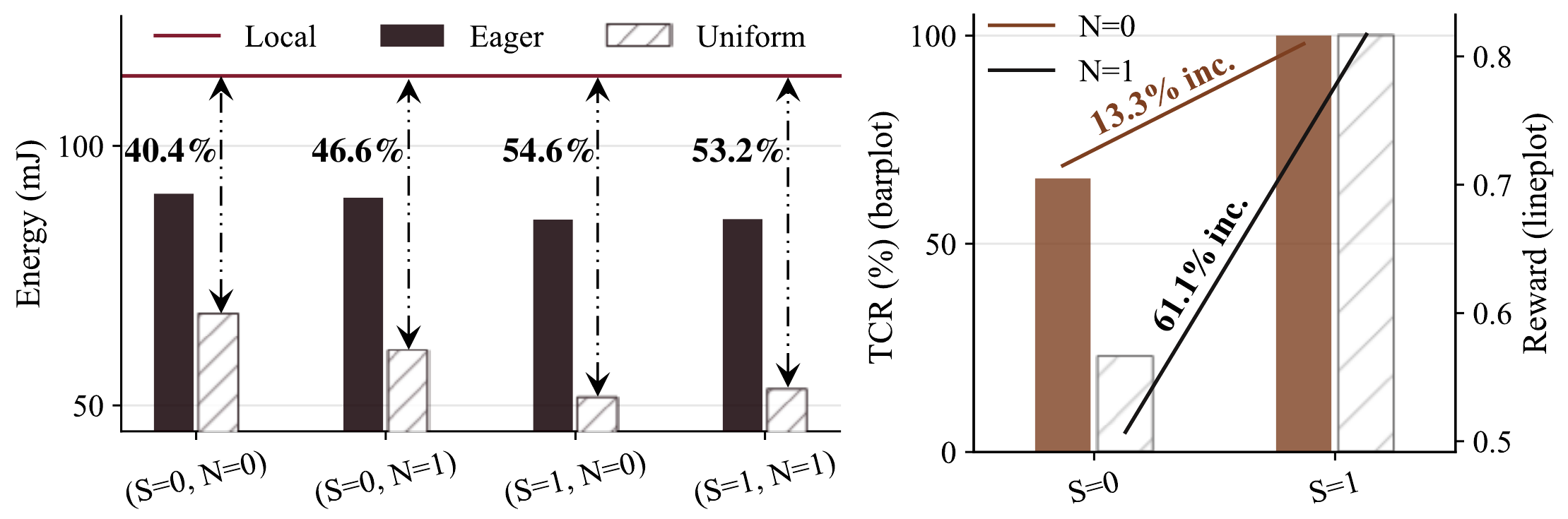}}
\vspace{-5ex}
\caption{EnergyShield's energy efficiency gains with respect to continuous local execution (\textit{left}) and safety analysis in terms of the $\mathcal{R}$ evaluation and \% TCR (\textit{right}).}
\label{fig:energy_safety}
\vspace{-2ex}
\end{figure}

\textbf{Energy vs. Distance: }To assess how representative the $\Delta_{max}$ upper bounds provided by the Runtime Safety Monitor are of the corresponding driving scene context, we examine EnergyShield's energy consumption at different distances from the nearest obstacle ($r$). The hypothesis is that larger $r$ values imply relatively ``safer'' driving situations, which would result in larger values of $\Delta_{max}$ to be sampled, and accordingly more offloading instances enhancing the NN's energy efficiency. As shown in Figure \ref{fig:E_v_d_combined}, we plot the average experienced normalized energy ratings of the two modes of EnergyShield with respect to local execution against $r$ across every configuration's set of 35 episodes. Each tick on the horizontal axis accounts for an entire range of 1m distances rather than a single value -- e.g., a value of 2 on the horizontal axis encompasses all distances in the range [2 - 3). At close distances ($r$ < 4m), we find that EnergyShield modes incur almost the same energy consumption overhead as that from the default local execution. This is mainly accredited to the Runtime Safety Monitor recognizing the higher risks associated with the close proximity from the objects, and accordingly outputting smaller values of $\Delta_{max}$ that can only be satisfied by local execution. As the distance from obstacles increases, so do the values of $\Delta_{max}$, causing a gradual increase in the number of offloading instances, followed by a progressive reduction in energy consumption. For instance, the eager and uniform modes achieve 32\% and 66\% respective reductions in energy consumption at $r = 13$ m for the (S = 1, N = 1) configuration. Even more so, all configurations of the respective eager and uniform modes at the ($r$ > 20m) bracket realize 33\% and 67\% respective energy gains.

\begin{figure}[!tbp]
\centering
{\includegraphics[,width = 0.46\textwidth]{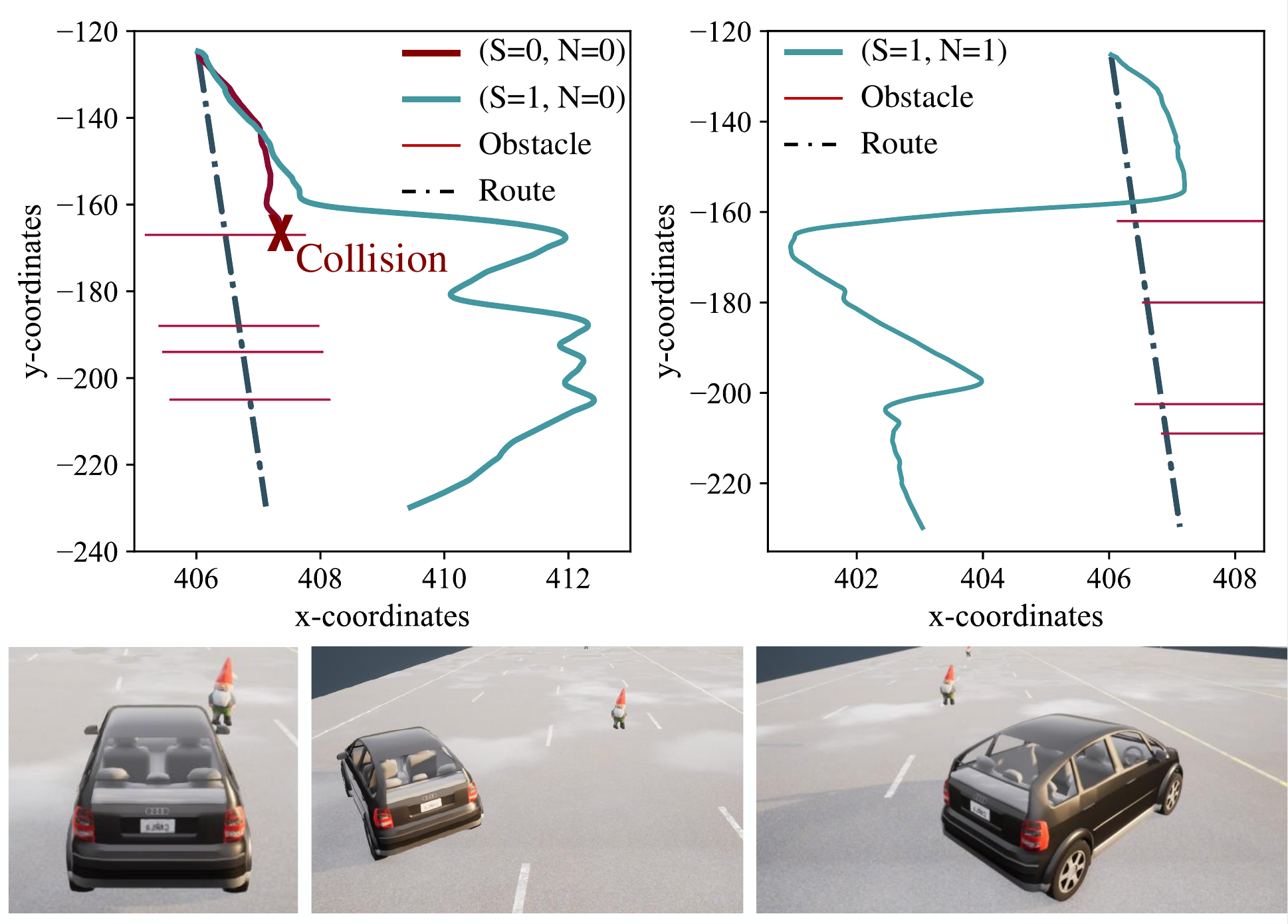}}
\vspace{-2ex}
\caption{Top: Example trajectories followed by the ego vehicle with the start point at the top. Bottom: illustration of how the ego vehicle under the aforementioned operational modes behaved in reaction to the first encountered obstacle.}
\label{fig:trajectories}
% \vspace{-1ex}
\end{figure}

% \begin{figure}[!tbp]
% \centering
% {\includegraphics[,width = 0.46\textwidth]{E_vs_r_eager.pdf}}
% \vspace{-1.5ex}
% \caption{Normalized Energy Gains for the eager EnergyShield mode with respect to the distance from obstacle}
% \label{fig:E_v_d_eager}
% % \vspace{-3.5ex}
% \end{figure}

% \begin{figure}[!tbp]
% \centering
% {\includegraphics[,width = 0.46\textwidth]{E_vs_r_uniform.pdf}}
% \vspace{-1.5ex}
% \caption{Normalized Energy Gains for the uniform EnergyShield mode with respect to the distance from obstacle}
% \label{fig:E_v_d_uniform}
% % \vspace{-3.5ex}
% \end{figure}

\begin{figure}[!tbp]
\centering
{\includegraphics[,width = 0.45\textwidth]{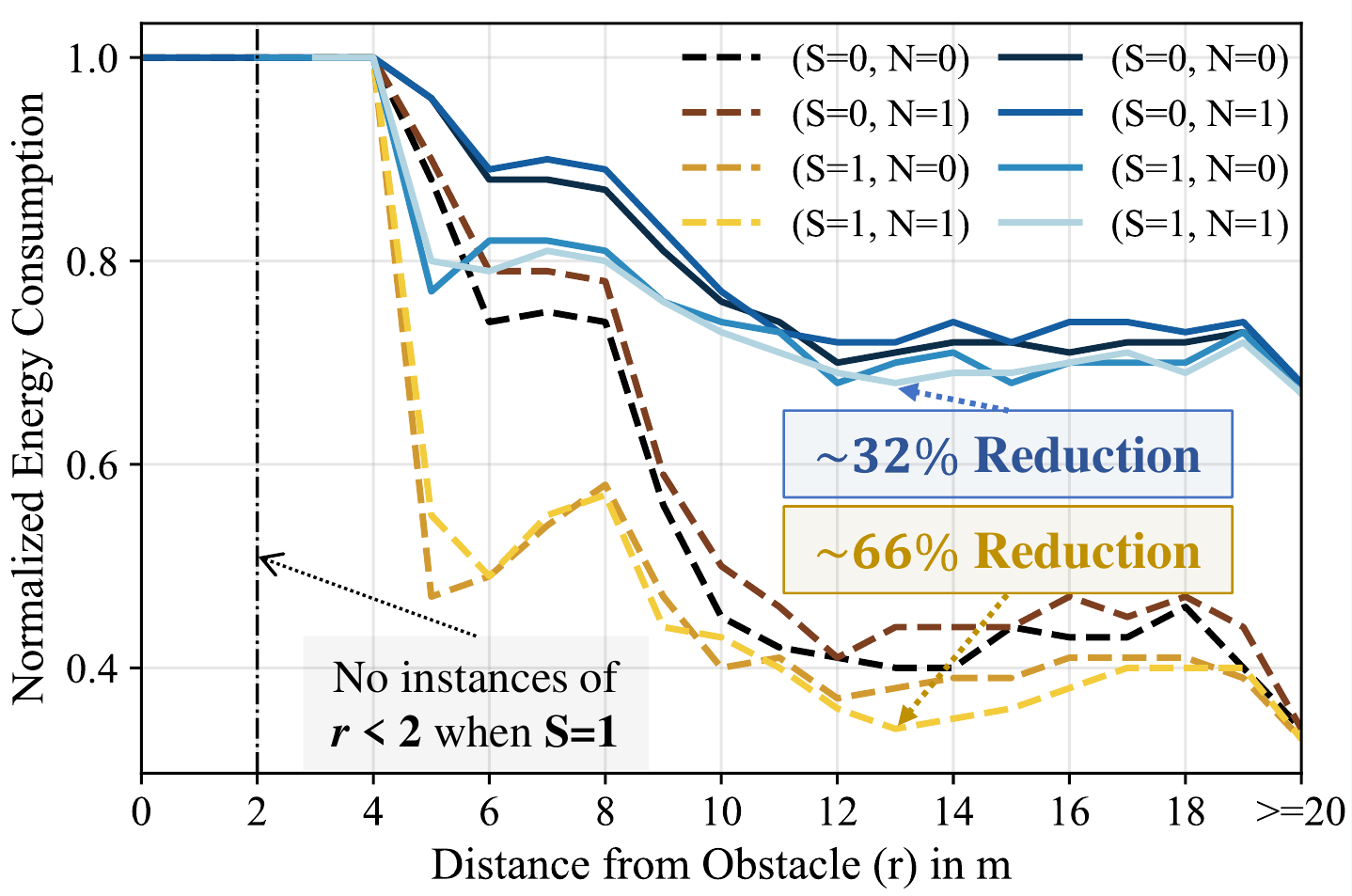}}
\vspace{-2ex}
\caption{Normalized Energy Gains for the eager (\textit{solid}) and uniform (\textit{dashed}) EnergyShield modes with respect to the distance from obstacle (\textit{r}) in m.}
\label{fig:E_v_d_combined}
% \vspace{-3.5ex}
\end{figure}

\subsection{Wireless Channel Variations}
In this experiment, we assess how the performance gains of EnergyShield are affected given variations of the wireless channel conditions. Specifically, given potential changes in the channel throughput, $\phi$, or the queuing delays, $q$, we investigate to what extent do the energy savings offered by EnergyShield vary. Additionally, we examine for every set of experimental runs what percentage of their total elapsed time windows were extra transition windows needed to complete a single offloading instance, which we denote by the \% \textit{Extra Transit Windows} metric. From here, we first analyze such effects when varying $\sigma_{\phi}\in \{20, 10, 5\}$ Mbps given a fixed $q=1$ ms, and then when varying $q \in \{10, 20, 50\}$ ms given a fixed $\sigma_{\phi}=10$ Mbps. For the uniform EnergyShield, we notice in Figure \ref{fig:bbox_windows} that the \textit{$\%$ Extra Transit Windows} drops for the contrasting conditions of high throughput ($\sigma_{phi}=20$ Mbps) and high queuing delays ($q=50$ ms), reaching medians of 7\% and 8\%, respectively. This can be justified in light of how the benign channel conditions ($\sigma_{\phi}=20$ Mbps) indicate that the majority of offloading instances are concluded in a single time window with no considerable need for extra transmission windows. Whereas at unfavorable wireless conditions ($q=50$ ms), $\hat{\Delta}$ values often exceed $\Delta_{max}$, leading EnergyShield to opt for local execution more often so as to avoid wireless uncertainty, lowering the total number of transmission windows alltogether. Such effects are also visible in the twin Figure \ref{fig:bbox_energy} as EnergyShield's energy consumption varies across these contrasting conditions, reaching respective medians of $45\%$ and $93\%$ of the local execution energy at $\sigma_{\phi}$=20 and $q$=50.  

\begin{figure}[!tbp]
\centering
{\includegraphics[,width = 0.46\textwidth]{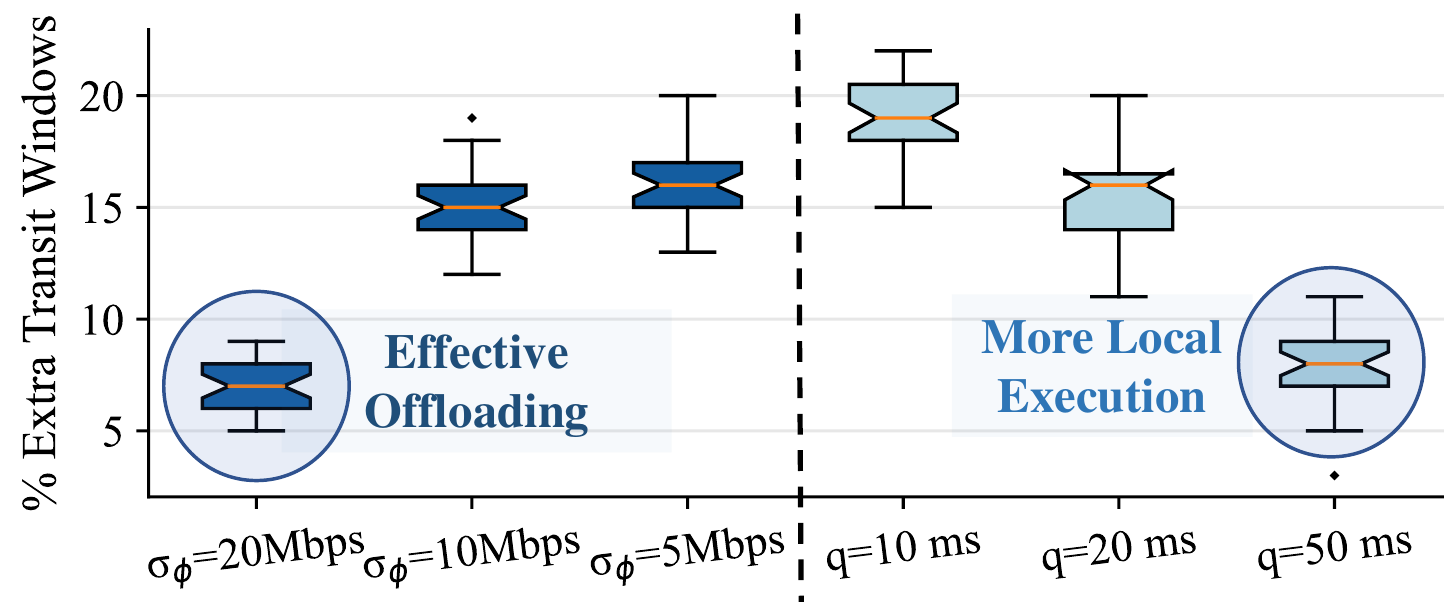}}
\vspace{-2.5ex}
\caption{Analyzing the \% extra transit windows over 35 episodes of uniform EnergyShield given various $\sigma_{\phi}$ and $q$.}
\label{fig:bbox_windows}
% \vspace{-3.5ex}
\end{figure}

\begin{figure}[!tbp]
\centering
{\includegraphics[,width = 0.46\textwidth]{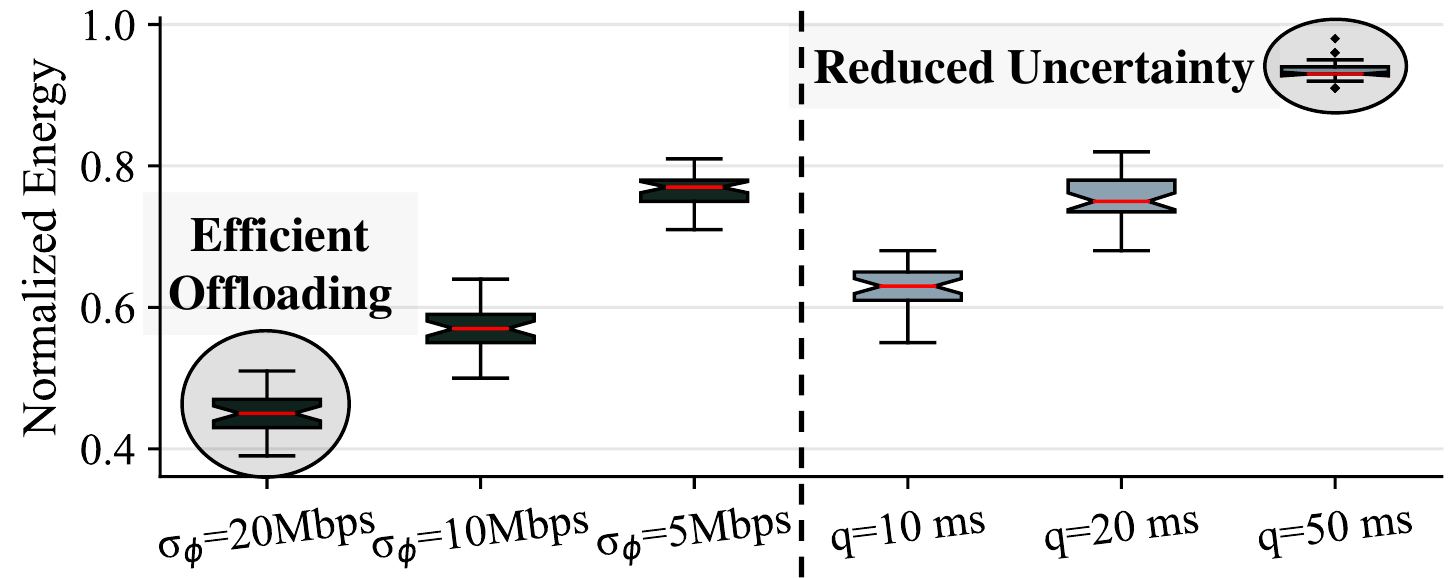}}
\vspace{-2.5ex}
\caption{Analyzing the Normalized Energy cons. over 35 episodes of uniform EnergyShield given various $\sigma_{\phi}$ and $q$.}
\label{fig:bbox_energy}
% \vspace{-2ex}
\end{figure}

\subsection{Generality}

To assess the generality of EnergyShield, we train 3 additional RL controllers to evaluate how consistent EnergyShield is with regards to maintaining safety guarantees, and how the energy efficiency gains would vary given a distinctive driving behavior for each agent. 
% for 1000, 4600, 4100 episodes, respectively)
Hence, we repeat the primary test runs for the additional controllers in which we average the energy consumption and TCR across 35 episodes of each viable (S, N) configuration. We also report the average center deviance (CD) experienced by the ego vehicle from the primary track lane as a metric to characterize the different driving behaviors of each controller. In our experimental test case, larger CD imply larger $r$ values, that is, controllers with low values of CD tend to drive closer to the obstacles at higher risks of collisions (maximizing $\mathcal{R}$ through minimizing CD), whereas larger values of CD indicate the agents have learnt to take the farther lanes of the track in order to maximize track completions through prioritizing collision avoidance (maximizing $\mathcal{R}$ through maximizing TCR). In the following, we highlight the key findings and refer the reader to Table \ref{tab:agents} in Appendix \ref{sec:experimental_details} for the detailed results. 

Our first key finding is that given S = 0, TCR (\%) is dependent on CD ratings as only the controllers with CD > 5m consistently achieve the 100\% TCR. At S = 1, however, this dependency no longer holds at S = 1 when the Controller Shield enforces the obstacle avoidance safety guarantees, pushing all controllers to achieve 100\% TCR irrespective of CD values. The other interesting finding here is that across both modes of EnergyShield, the average energy consumption is less for controllers with larger CD ratings. For instance at (S = 1, N = 1), when the RL controllers are arranged in an increasing order of their CD values 2.8$\to$3.6$\to$5.4$\to$5.7 m, the average energy consumption per inference decreases in the respective order 53.1$\to$45.7$\to$42.1$\to$39.8 mJ. Indeed, this highlights EnergyShield's capability of conducting \textit{safe} and \textit{effective} context-aware offloading, especially given how the Runtime Safety Monitor provides large $\Delta_{max}$ to realize more energy gains in the safer situations (e.g., larger distances from obstacles), and how the Controller Shield always maintains safety guarantees independent of offloading decisions. 

\bibliographystyle{ACM-Reference-Format}
\bibliography{sources}

%%
%% If your work has an appendix, this is the place to put it.
\appendix

%
% !TEX root = ./main.tex

\section{Proof of Lemma \ref{lem:es_controller_shield}} % (fold)
\label{sec:proof_lemma_1}

\begin{proof}
In this proof, let  $\chi[n^\prime] = (r[n^\prime], \xi[n^\prime], 
v[n^\prime])$,  $n^\prime \in \mathbb{Z}$.

The proof will be constructive. To this end, recall that we have assumed a 
capped (controlled) maximum vehicle velocity of $v_\text{max}$. Thus, let  
$\gamma = 2 \cdot v_\text{max} \cdot T$, where $T$ is the sample period (see  
Section \ref{sub:notation}). As a consequence, also note that for $r[n_0-1] > 
r_\text{min}(\xi[n_0-1])$:
\begin{equation}
	r[n_0+1] > r[n_0-1] - 2\cdot v_\text{max} \cdot T > \bar{r} - \gamma
\end{equation}
so that on any two-sample period
\begin{equation}
	| \dot{\xi} | \leq v_\text{max} \cdot \left(\tfrac{1}{\bar{r}-\gamma} + \tfrac{1}{\ell_r}\right)
\end{equation}
Then observe that the Lipschitz constant of the function $r_\text{min}$ is 
bounded by $L_{r_\text{min}} \leq \tfrac{\bar{r} \cdot  
\sigma}{2\cdot(1-2\cdot\sigma)^2}$. Finally, conclude that
\begin{multline}\label{eq:eta_def}
	| r_\text{min}(\xi[n_0-1]) - r_\text{min}(\xi[n_0+1]) |  \\
	\leq \tfrac{\bar{r} \cdot \sigma }{2\cdot(1-2\cdot\sigma)^2} \cdot v_\text{max} \cdot \left(\tfrac{1}{\bar{r}-\gamma} + \tfrac{1}{\ell_r}\right) \cdot 2 \cdot T \triangleq \eta
\end{multline}
Then choose $\rho \triangleq \eta + \gamma$.

Now, given the structure of the amended shield in \eqref{eq:mod_shield},  
establishing the conclusion of Problem \ref{prob:shield_design} can be broken  
into three cases: 
\begin{enumerate}[label={\itshape (\roman*)}]
	\item $r[n_0] \geq r_\text{min}(\xi[n_0]) + \rho$; (irrespective of the 
		position of $r[n_0-1]$)

	\item $r[n_0] < r_\text{min}(\xi[n_0]) + \rho$ and $r[n_0 -1] \geq 
		r_\text{min}(\xi[n_0 - 1]) + \rho$; 

	\item $r[n_0] < r_\text{min}(\xi[n_0]) + \rho$ and $r[n_0 -1] < 
		r_\text{min}(\xi[n_0 - 1]) + \rho$.

\end{enumerate}
In each of the three cases, we need to show that for the next state,  
$h(\chi[n_0 + 1]) > 0$, or equivalently $r[n_0 + 1] >  
r_\text{min}(\xi[n_0+1])$.

\emph{Case (i) and Case (ii).} The claim follows for these cases for 
essentially directly by the choice of $\rho$ above. In \emph{Case (ii)}, we 
have that
\begin{multline}\label{eq:case_1_2_main}
	r[n_0+1] - r_\text{min}(\xi[n_0+1]) \geq \big( r[n_0-1] - r_\text{min}(\xi[n_0-1]) \big)  \\
	- \big( r[n_0+1] - r_\text{min}(\xi[n_0+1]) - r[n_0-1] - r_\text{min}(\xi[n_0-1]) \big).
\end{multline}
From the above calculations, we see that the second term on the right-hand side 
of \eqref{eq:case_1_2_main} is bounded below by $-\rho$ (using the triangle  
inequality). Thus
\begin{equation}
	r[n_0+1] - r_\text{min}(\xi[n_0+1]) \geq \big( r[n_0-1] - r_\text{min}(\xi[n_0-1]) \big) - \rho
\end{equation}
and the desired conclusion follows since $r[n_0-1] - r_\text{min}(\xi[n_0-1])  
\geq \rho$ by assumption. A similar approach proves \emph{Case (i)}: simply 
repeat the calculations in the definition of $\rho$, only over one sample.

Thus, it remains to consider \emph{Case (iii)}. This case is somewhat easier, 
because the control signal is being overridden, so the state delay presents 
technical difficulties as above. Thus, it follows almost directly from the 
properties of the controller shield as designed in \cite{ferlez2020shieldnn}. 
In particular, the ShieldNN verifier establishes that the boundary between 
``safe'' and  ``unsafe'' controls is a concave (resp. convex) function of $\xi$ 
for $\xi \in [0, \pi]$ (resp. $\xi \in [-\pi,0]$). Hence, by  \cite[Theorem  
1]{ferlez2020shieldnn}, the constant control $\beta_\text{max}$ (resp.  
$-\beta_\text{max}$) always preserves safety for \emph{any duration of time} 
starting from a state $\xi \in [0,\pi]$ (resp. $\xi \in [-\pi,0]$).
\end{proof}

% section proof_of_ (end)

\section{Proof of Lemma \ref{lem:es_safety_monitor}} % (fold)
\label{sec:proof_lemma_2}

\begin{proof}
By the arguments above, the form of  $\Delta_\text{max}$ in  
\eqref{eq:safety_monitor_formal} will solve Problem \ref{prob:safety_monitor}, 
provided \eqref{eq:gronwall_conclusion_statement} implies 
\eqref{eq:continuous_safety_cond}. Thus, we confine the proof to showing this 
fact.

To begin, observe that:
\begin{multline}
	h_{\bar{r},\sigma}( \zeta_{\scriptscriptstyle \mathbf{1}_{\omega[n_0]}}^{\scriptscriptstyle 0,\chi\negthickspace[n_0\negthinspace -\negthinspace 1]}(t) )
	\geq
	\Big|
	\big|  
		h_{\bar{r},\sigma} ( \zeta_{\scriptscriptstyle \mathbf{1}_{\omega[n_0]}}^{\scriptscriptstyle 0,\chi\negthickspace[n_0\negthinspace -\negthinspace 1]}(t) )
		-
		h_{\bar{r},\sigma}( \chi[n_0-1] )
	\big| 
	\\
	-
	\big|
		h_{\bar{r},\sigma}( \chi[n_0-1] )
	\big|
	\Big|
\end{multline}
by the triangle inequality. Consequently:
\begin{multline}\label{eq:triangle_bound_pre_gronwall}
	\big|  
		h_{\bar{r},\sigma} ( \zeta_{\scriptscriptstyle \mathbf{1}_{\omega[n_0]}}^{\scriptscriptstyle 0,\chi\negthickspace[n_0\negthinspace -\negthinspace 1]}(t) )
		-
		h_{\bar{r},\sigma}( \chi[n_0-1] )
	\big| 
	\leq
	\big|
		h_{\bar{r},\sigma}( \chi[n_0-1] )
	\big|
	\\
	\implies
	h_{\bar{r},\sigma}( \zeta_{\scriptscriptstyle \mathbf{1}_{\omega[n_0]}}^{\scriptscriptstyle 0,\chi\negthickspace[n_0\negthinspace -\negthinspace 1]}(t) ) \geq 0
\end{multline}
Hence define $z(\zeta_{\scriptscriptstyle 
\mathbf{1}_{\omega[n_0]}}^{\scriptscriptstyle 
0,\chi\negthickspace[n_0\negthinspace -\negthinspace 1]}(t)) \triangleq \big|  
h_{\bar{r},\sigma} ( \zeta_{\scriptscriptstyle 
\mathbf{1}_{\omega[n_0]}}^{\scriptscriptstyle 
0,\chi\negthickspace[n_0\negthinspace -\negthinspace 1]}(t) ) - 
h_{\bar{r},\sigma}( \chi[n_0-1] ) \big|$.

By the Gr\"onwall inequality, we have further that:
\begin{equation}\label{eq:gronwall_conclusion}
	z(\zeta_{\scriptscriptstyle \mathbf{1}_{\omega[n_0]}}^{\scriptscriptstyle 0,\chi\negthickspace[n_0\negthinspace -\negthinspace 1]}(t))
	\leq
	\sqrt{2} \cdot L_{h_{\bar{r},\sigma}} \cdot \lVert f_{\scriptscriptstyle\text{KBM}}(\chi[n_0-1],\omega[n_0]) \rVert_2 \cdot t \cdot e^{L_{f_\text{KBM}} \cdot t}
\end{equation}
where $L_{h_{\bar{r},\sigma}}$ and $L_{f_{\scriptscriptstyle\text{KBM}}}$ are 
as in the statement of the Lemma. Observe that the function on the right-hand 
side of \eqref{eq:gronwall_conclusion} is monotonic in $t$. Thus, we can use 
\eqref{eq:triangle_bound_pre_gronwall} to claim that if $\nu$ solves 
\eqref{eq:gronwall_conclusion_statement} (derived immediately from 
\eqref{eq:gronwall_conclusion} and \eqref{eq:triangle_bound_pre_gronwall}),  
then the claim of the Lemma holds.
\end{proof}

\begin{table*}[ht]
    \centering
    \caption{EnergyShield Performance across 4 different RL controllers. Each RL agent learnt to travel the route through a distinctive policy represented by its center deviance (CD) from the primary track. The RL Controllers are numerically arranged in the increasing order of CD with Controller 1 being the main RL controller used in all evaluations.}
    \vspace{-2ex}
    \begin{tabular}{c c | c c c | c c c | c c c | c c c }
    \hline
    \multirow{2}{*}{Policy} & \multirow{2}{*}{(S, N)} & \multicolumn{3}{c}{Controller 1} & \multicolumn{3}{c}{Controller 2} & \multicolumn{3}{c}{Controller 3} & \multicolumn{3}{c}{Controller 4} \\
    \cline{3-6} \cline{7-10} \cline{11-14}
    & & CD(m) & TCR(\%) & E(mJ) & CD (m) & TCR(\%) & E(mJ) &  CD(m) & TCR(\%) & E(mJ) & CD(m) & TCR(\%) & E(mJ) \\
    \hline
    \multirow{4}{*}{Local} & (0, 0) & 0.92 & 65.7 & \multirow{4}{*}{113.5} & 2.3 & 100 & \multirow{4}{*}{113.5} & 5.5 & \multirow{4}{*}{100} & \multirow{4}{*}{113.5} & 5.8 & \multirow{4}{*}{100} & \multirow{4}{*}{113.5} \\
    & (0, 1) & 0.8 & 22.9 & & 2.3 & 97.1 & & 5.5 & & & 5.8 & & \\
    & (1, 0) & 2.87 & 100 & & 3.5 & 100 & & 5.5 & & & 5.9 & & \\
    & (1, 1) & 2.91 & 100 & & 3.6 & 100 & & 5.4 & & & 5.7 & & \\
    \hline
    \multirow{3}{*}{EnergyShield} & (0, 0) & 0.8 & 68.6 & 90.8 & 2.3 & 100 & 90 & \multirow{4}{*}{5.5} & \multirow{4}{*}{100} & 79.7 & 5.8 & \multirow{4}{*}{100} & 77.6 \\
    \multirow{3}{*}{(\textit{eager})}& (0, 1) & 0.8 & 34.3 & 90 & 2.3 & 100 & 89 & & & 80 & 5.7 & & 78.9 \\
    & (1, 0) & 2.8 & 100 & 85.8 & 3.5 & 100 & 81.6 & & & 79.6 & 5.7 & & 77.5 \\
    & (1, 1) & 2.8 & 100 & 85.9 & 3.6 & 100 & 81.9 & & & 78.5 & 5.8 & & 77.9 \\
    \hline
    \multirow{3}{*}{EnergyShield} & (0, 0) & 0.9 & 74.3 & 67.7 & 2.3 & 100 & 63.5 & 5.5 & \multirow{4}{*}{100} & 43.7 & 5.8 & \multirow{4}{*}{100} & 39.7\\
    \multirow{3}{*}{(\textit{uniform})}& (0, 1) & 0.7 & 22.9 & 60.6 & 2.3 & 97.1 & 63.1 & 5.5 & & 44.4 & 5.7 & & 40.8 \\
    & (1, 0) & 2.9 & 100 & 51.5 & 3.5 & 100 & 44.5 & 5.5 & & 43.6 & 5.8 & & 40.1 \\
    & (1, 1) & 2.8 & 100 & 53.1 & 3.6 & 100 & 45.7 & 5.4 & & 42.1 & 5.7 & & 39.8 \\   
    \hline

    \end{tabular}
    \label{tab:agents}
    \vspace{-2.5ex}
\end{table*}

\section{Additional Experimental Details}
\label{sec:experimental_details}

\subsection{Training Details} The primary RL agent training was conducted under the (S = 0, N = 0) configuration settings using the Proximal Policy Optimization (PPO) algorithm for a total of ~1800 episodes. In the last 400 training episodes, we randomized the ego vehicle's spawning position and orientation along its lateral dimension to aid the agent in learning how to recover from maneuvering moves. For the $\beta$-VAE, we used the pretrained model from \cite{ferlez2020shieldnn} which was trained to generate a 64-dimensional latent feature vector from Carla driving scenes.

\subsection{Reward Function}For the reward function $\mathcal{R}$, we defined:
\begin{equation}
    \mathcal{R} = 
    \begin{cases} 
    -P, \;\;\;\;\;\;\;\;\;\;\;\;\;\;\;\;\;\;\;\;\;\;\;\;\;\;\;\; \text{collision or }CD > CD_{th} \\
    +P, \;\;\;\;\;\;\;\;\;\;\;\;\;\;\;\;\;\;\;\;\;  \text{track completed successfully}\\ 
    f_{\mathcal{R}}(v, CD, \vartheta, r), \;\;\;\;\;\;\;\;\;\;\;\;\;\;\;\;\;\;\;\;\;\;\;\;\;\;\;\;\; \text{otherwise} \\
    \end{cases} \label{eqn:reward}
\end{equation}
in which $P$ is large positive number, $v$ is the vehicle's velocity, $CD$ is the vehicle's center deviance from the center of the track, $CD_{th}$ is a predetermined threshold value, $\vartheta$ represents the angle between the heading of the vehicle and the tangent to the curvature of the road segment, and $r$ is the distance to the closest obstacle. As shown, $\mathcal{R}$ can evaluate to: (\emph{i}) ($+P$) if it completes the track successfully (large positive reward), (\emph{ii}) ($-P$) if it incurs a collision or deviates from the center of the road beyond $CD_{th}$, or (\emph{iii}) a function $f_{\mathcal{R}}(\cdot)$ of the aforementioned variables given by:
\begin{equation*}
	f_{\mathcal{R}}(v,CD, \vartheta, r) = \omega_1 \cdot f_1(v) + \omega_2 \cdot f_2(\mathcal{CD}) + \omega_3 \cdot f_3(\vartheta) + \omega_4 \cdot f_4(r)
\end{equation*}
 
\begin{align*}
\text{s.t., \;\;\;\;}f_1(v) &=
\begin{cases}
\frac{v}{v_{min}}, \;\;\;\;\;\;\;\;\;\;\;\;\;\;\;\;\;\;\;\;\;\;\;         v < v_{min} \\
1 - \frac{v-v_{target}}{v_{max}-v_{target}}\;\;\;\;\;\;\;\;  v > v_{target} \\     
1,  \;\;\;\;\;\;\;\;\;\;\;\;\;\;\;\;\;\;\;\;\;\;\;\;\;\;\;\; otherwise      
\end{cases} \\
f_2(CD) &= \max(1-\frac{l_{center}}{l_{max}}, 0) \\
f_3(\vartheta) &= \max(1-|\frac{\vartheta}{\pi/9}|, 0) \\
f_4(dist) &= \max(\min(\frac{||r||}{r_{max}}, 1), 0) 
\end{align*}
in which $v_{min}$, $v_{max}$, $v_{target}$ are the minimum, maximum, and target velocities, respectively. $l_{center}$ is the lateral distance from the center of the vehicle to the designated track. $\vartheta$ is the angle between the head of the vehicle and the track's tangent. For our experiments, we set $v_{min}$=35 km/hr, $v_{target}$=40 km/hr, $v_{max}$=45 km/hr, $r_{max}$ = 20 m, $l_{max}$ = 10 m, and $P$= 100.

\subsection{Performance Evaluations}We use the standard TensorRT library to compile our models on the Nvidia Drive PX2 ADS platform as an optimized inference engine and measure its execution latency. To evaluate the local execution power, we measure the difference in average power drawn by the Nvidia Drive PX2 when processing and idling. 

\subsection{Queuing Delays}We leverage the The M/M/1/k model for the queuing delays, $L_{que}$, which entails $q_c = \frac{(1-\rho)(\rho)^c}{1-\rho^{C+1}}$ representing the probability that an offloaded task will find $c$ tasks stored in the server's buffer of size $C$ upon arrival with $\rho$ being the average server load. We assume each task contributes an extra 1 ms delay, and thus, $q_c$ positions directly translate to $L_{que}$ in ms. The default settings for queuing delays entail C = 4000 and $\rho=0.97$ unless otherwise was stated.

\subsection{Edge Response Estimation}As offloading decisions are made based on estimates of the prior edge response times, the estimated communication latency, $\hat{L}_{comm}$, at time $n$ can be defined as a function of the $k$ previous values of the effective throughput $\phi$ and queuing delays $q$ as follows: $\hat{L}_{comm}(n) = \Phi(\phi_{n-k:n-1}, q_{n-k: n-1})$. For our experiments, we set $k=5$ and employ a moving average function to evaluate $\Phi$.

\end{document}